\documentclass[a4paper]{amsart}

\usepackage{amscd}
\usepackage{amsmath}
\usepackage{amssymb}
\usepackage{amsthm}
\usepackage{bbm}

\usepackage[T1]{fontenc}
\usepackage[utf8]{inputenc}

\newif\ifpdf
\ifx\pdfoutput\undefined
   \pdffalse        
\else
   \pdfoutput=1     
   \pdftrue
\fi

\ifpdf
   \usepackage[pdftex]{graphicx}
   \pdfadjustspacing=1
   \pdfcompresslevel=9
\else
   \usepackage{graphicx}
\fi

\frenchspacing

\numberwithin{equation}{section} \swapnumbers

\newtheorem{satz}{Satz}[section]

\newtheorem{theorem}[satz]{Theorem}
\newtheorem{proposition}[satz]{Proposition}

\newtheorem{lemma}[satz]{Lemma}

\newtheorem{definition}[satz]{Definition}

\newtheorem{remark}[satz]{Remark}

\begin{document}
\hyphenation{pre-sent cor-res-pond know-led-ge}

\title[Exponential stock models driven by tempered stable processes]{Exponential stock models driven by tempered stable processes}
\author{Uwe K{\"u}chler \and Stefan Tappe}
\address{Humboldt Universit\"{a}t zu Berlin, Institut f\"{u}r Mathematik, Unter den Linden 6, D-10099 Berlin, Germany}
\email{kuechler@mathematik.hu-berlin.de}
\address{Leibniz Universit\"{a}t Hannover, Institut f\"{u}r Mathematische Stochastik, Welfengarten 1, D-30167 Hannover, Germany}
\email{tappe@stochastik.uni-hannover.de}
\begin{abstract}
We investigate exponential stock models driven by tempered stable processes, which constitute a rich family of purely discontinuous L\'{e}vy processes. With a view of option pricing, we provide a systematic analysis of the existence of equivalent martingale measures, under which the model remains analytically tractable. This includes the existence of Esscher martingale measures and martingale measures having minimal distance to the physical probability measure. Moreover, we provide pricing formulae for European call options and perform a case study.
\end{abstract}
\keywords{Exponential stock model, tempered stable process, bilateral Esscher transform, option pricing}
\subjclass[2010]{60G51, 91G20}
\maketitle

\section{Introduction}

Tempered stable distributions form a class of distributions that have attracted the interest of researchers from probability theory as well as financial mathematics. They have first been introduced in \cite{Koponen}, where the associated L\'{e}vy processes are called ``truncated L\'{e}vy flights'', and have been generalized by several authors. Tempered stable distributions form a six parameter family of infinitely divisible distributions, which cover several well-known subclasses like Variance Gamma distributions \cite{Madan-1990, Madan}, bilateral Gamma distributions \cite{Kuechler-Tappe, Kuechler-Tappe-shapes} and CGMY distributions \cite{CGMY}. Properties of tempered stable distributions have been investigated, e.g., in \cite{Rosinski, Zhang,  Sztonyk, Bianchi-3}, and in \cite{Kuechler-Tappe-TS}, where some of the results of this paper have been announced. For financial modeling they have been applied, e.g., in \cite{Boy, Cont-Tankov, Mercuri, Bianchi-1, Bianchi-2}, see also the recent textbook \cite{Bianchi-book}.

The purpose of this paper is to provide a systematic analysis of the existence of equivalent martingale measures for exponential stock price models driven by tempered stable processes, under which the computation of option prices remains analytically tractable. In particular, we are interested in martingale measures, under which the driving process remains a tempered stable process, or at least becomes a L\'{e}vy process for which the characteristic function is explicitly known.

Equivalent martingale measures of interest, under which the driving process remains a tempered stable process, are the Esscher martingale measure and bilateral Esscher martingale measures which minimize the distance to the original probability measure in a certain sense, for example the minimal entropy martingale measure or the $p$-optimal martingale measure. We will examine the existence of these martingale measures in detail. Furthermore, we will treat the F\"{o}llmer Schweizer minimal martingale measure. In case of existence, the driving process is the sum of two independent tempered stable processes under this measure, and thus the model remains analytically tractable. For all the just mentioned martingale measures, we will derive option pricing formulae. Moreover, we will illustrate our findings by means of a case study.

The remainder of this text is organized as follows: In Section~\ref{sec-stock} we introduce the stock model. Afterwards, in Section~\ref{sec-Esscher} we study Esscher transforms, in Section~\ref{sec-bilateral-Esscher} we study bilateral Esscher transforms, and in Section~\ref{sec-MMM} we treat the F\"{o}llmer Schweizer minimal martingale measure. Section~\ref{sec-option} is devoted to option pricing formulae, and in Section~\ref{sec-case-study} we provide the case study.

\section{Stock price models driven by tempered stable
processes}\label{sec-stock}

In this section, we shall introduce the stock price model and review some results about tempered stable processes. The reader is referred to \cite{Kuechler-Tappe-TS} for all results about tempered stable processes which we recall in this section.

Let $(\Omega,\mathcal{F},(\mathcal{F}_t)_{t \geq 0},\mathbb{P})$ be a
filtered probability space satisfying the usual conditions.
We fix parameters $\alpha^+,\lambda^+,\alpha^-,\lambda^- \in (0,\infty)$ and $\beta^+,\beta^- \in (0,1)$. An infinitely divisible distribution $\eta$ on $(\mathbb{R},\mathcal{B}(\mathbb{R}))$ is called a \emph{tempered stable distribution}, denoted
\begin{align*}
\eta = {\rm TS}(\alpha^+,\beta^+,\lambda^+;\alpha^-,\beta^-,\lambda^-),
\end{align*}
if its characteristic function is given by
\begin{align*}
\varphi(z) = \exp \bigg( \int_{\mathbb{R}} \big( e^{izx} - 1 \big) F(dx) \bigg), \quad z \in \mathbb{R} 
\end{align*}
where the L\'{e}vy measure $F$ is
\begin{align}\label{Levy-measure-TS}
F(dx) = \left( \frac{\alpha^+}{x^{1 + \beta^+}} e^{-\lambda^+ x}
\mathbbm{1}_{(0,\infty)}(x) + \frac{\alpha^-}{|x|^{1 + \beta^-}}
e^{-\lambda^- |x|} \mathbbm{1}_{(-\infty,0)}(x) \right)dx.
\end{align}

\begin{remark}
In \cite{Kuechler-Tappe-pricing} we have studied exponential stock models driven by bilateral Gamma processes, which would occur for $\beta^+ = \beta^- = 0$.
\end{remark}

We can express the characteristic function of $\eta$ as
\begin{equation}\label{cf-tempered-stable}
\begin{aligned}
\varphi(z) &= \exp \Big( \alpha^+ \Gamma(-\beta^+) \big[ (\lambda^+ - iz)^{\beta^+} - (\lambda^+)^{\beta^+} \big] 
\\ &\quad\quad\quad + \alpha^- \Gamma(-\beta^-) \big[ (\lambda^- + iz)^{\beta^-} - (\lambda^-)^{\beta^-} \big] \Big), \quad z \in \mathbb{R},
\end{aligned}
\end{equation}
where the powers stem from the main branch of the complex logarithm.
We call the L\'{e}vy process $X$ associated to $\eta$ a \emph{tempered stable process}, and write
\begin{align}\label{X-TS}
X \sim {\rm TS}(\alpha^+,\beta^+,\lambda^+;\alpha^-,\beta^-,\lambda^-).
\end{align}
The cumulant generating function 
\begin{align*}
\Psi(z) = \ln \mathbb{E}_{\mathbb{P}} [e^{zX_1}]
\end{align*}
exists on $[-\lambda^-,\lambda^+]$ and is given by
\begin{equation}\label{cumulant}
\begin{aligned}
\Psi(z) &= \alpha^+ \Gamma(-\beta^+) \big[ (\lambda^+ - z)^{\beta^+} -
(\lambda^+)^{\beta^+} \big]
\\ &\quad + \alpha^- \Gamma(-\beta^-)
\big[ (\lambda^- + z)^{\beta^-} - (\lambda^-)^{\beta^-} \big], \quad z \in
[-\lambda^-,\lambda^+].
\end{aligned}
\end{equation}
All increments of $X$ have a tempered stable distribution, more precisely
\begin{align}\label{distribution-TS-time}
X_t - X_s \sim {\rm TS}(\alpha^+ (t-s), \beta^+, \lambda^+; \alpha^- (t-s), \beta^-,
\lambda^-) \quad \text{for $0 \leq s < t$.}
\end{align}
A {\em tempered stable stock model} is an exponential L\'{e}vy model of the type
\begin{align}\label{exp-Levy-model}
\left\{
\begin{array}{rcl}
S_t & = & S_0 e^{X_t}
\\ B_t & = & e^{r t}
\end{array}
\right.
\end{align}
where $X$ denotes a tempered stable process and $S$ is a dividend paying stock with deterministic initial value $S_0 > 0$ and dividend rate $q \geq 0$. Furthermore, $B$ is the bank account with interest rate $r \geq 0$. In what follows, we assume that $r \geq q \geq 0$.
An equivalent probability measure $\mathbb{Q} \sim \mathbb{P}$ is a \emph{local martingale measure} (in short, \emph{martingale measure}), if the discounted stock price process
\begin{align}\label{discounted-price}
\tilde{S}_t := e^{-(r-q)t} S_t = S_0 e^{X_t - (r-q)t}, \quad t \geq
0
\end{align}
is a local $\mathbb{Q}$-martingale. 
The existence of a martingale measure $\mathbb{Q} \sim \mathbb{P}$ ensures that the stock market is free of arbitrage, and the price of an European option $\Phi(S_T)$, where $T > 0$ is the time of maturity and $\Phi : \mathbb{R} \rightarrow \mathbb{R}$ the payoff profile, is given by
\begin{align*}
\pi = e^{-rT} \mathbb{E}_{\mathbb{Q}}[\Phi(S_T)].
\end{align*}

\begin{lemma}\label{lemma-TS-mm}
The following statements are true:
\begin{enumerate}
\item If $\lambda^+ \geq 1$, then $\mathbb{P}$ is a martingale measure
if and only if
\begin{align}\label{martingale-eqn}
\alpha^+ \Gamma(-\beta^+) \big[ (\lambda^+ - 1)^{\beta^+} -
(\lambda^+)^{\beta^+} \big] + \alpha^- \Gamma(-\beta^-) \big[ (\lambda^- +
1)^{\beta^-} - (\lambda^-)^{\beta^-} \big] = r-q.
\end{align}

\item If $\lambda^+ < 1$, then $\mathbb{P}$ is never a martingale measure.
\end{enumerate}
\end{lemma}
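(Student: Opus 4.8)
The plan is to reduce everything to the classical martingale criterion for exponential L\'{e}vy models applied to $\tilde S_t = S_0 e^{X_t - (r-q)t}$. The first observation is that if $\tilde S$ is a local $\mathbb{P}$-martingale, then, being strictly positive, it is a $\mathbb{P}$-supermartingale, so $\mathbb{E}_{\mathbb{P}}[\tilde S_t] \le \tilde S_0 = S_0 < \infty$ for all $t \ge 0$; hence $\mathbb{E}_{\mathbb{P}}[e^{X_t}] < \infty$, and by the infinite divisibility of $X_t$ together with \eqref{distribution-TS-time} this forces $\mathbb{E}_{\mathbb{P}}[e^{X_1}] < \infty$, i.e. $1 \in [-\lambda^-,\lambda^+]$. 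Since $\lambda^- > 0$, this is equivalent to $\lambda^+ \ge 1$. This already settles part~(2): if $\lambda^+ < 1$, then by \eqref{cumulant} the cumulant $\Psi(1)$ is infinite, $\mathbb{E}_{\mathbb{P}}[e^{X_1}] = \infty$, and $\tilde S$ cannot be a local martingale, so $\mathbb{P}$ is never a martingale measure.

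For part~(1), assume $\lambda^+ \ge 1$, so that $\Psi(1)$ is finite and obtained by inserting $z = 1$ into \eqref{cumulant}. Using independence and stationarity of the increments of $X$ and $\mathbb{E}_{\mathbb{P}}[e^{X_t}] = e^{t\Psi(1)}$, the process $M_t := e^{X_t - t\Psi(1)}$ is a genuine $\mathbb{P}$-martingale, and with the deterministic $C^1$-function $c(t) := S_0 e^{t(\Psi(1) - (r-q))}$ we have $\tilde S_t = c(t) M_t$. If $\Psi(1) = r-q$, then $c \equiv S_0$ and $\tilde S = S_0 M$ is a martingale, hence a local martingale, so $\mathbb{P}$ is a martingale measure; and $\Psi(1) = r-q$ is literally equation \eqref{martingale-eqn}. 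Conversely, if $\tilde S$ is a local martingale, integration by parts gives $d\tilde S_t = c(t)\,dM_t + M_t\,c'(t)\,dt$, which exhibits $\tilde S$ as a special semimartingale whose finite-variation part $\int_0^{\cdot} M_s c'(s)\,ds$ must then vanish; since $M > 0$ this forces $c' \equiv 0$, i.e. $\Psi(1) = r-q$, which is \eqref{martingale-eqn}.

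The only genuinely delicate point is the implication ``local martingale $\Rightarrow$ \eqref{martingale-eqn}'': a priori $\tilde S$ could be a strict local martingale and so render $\mathbb{P}$ a martingale measure without the drift condition holding. The argument above rules this out — positivity yields a supermartingale and hence integrability, after which the deterministic factorisation $\tilde S_t = c(t)M_t$ with $M$ a true martingale pins down $c$ to be constant. Alternatively, one may simply invoke the standard fact (see, e.g., \cite{Cont-Tankov}) that for exponential L\'{e}vy models the discounted price is a local martingale if and only if it is a true martingale, if and only if $\mathbb{E}_{\mathbb{P}}[e^{X_1}] < \infty$ and $\Psi(1) = r-q$; the lemma is then the specialisation of this criterion to the tempered stable cumulant \eqref{cumulant}, with the finiteness condition $\mathbb{E}_{\mathbb{P}}[e^{X_1}] < \infty$ translating into $\lambda^+ \ge 1$.
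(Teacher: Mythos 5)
Your proof is correct, and it reaches the lemma by the same reduction the paper uses: everything comes down to the standard exponential-L\'{e}vy martingale criterion, namely that $\tilde S$ is a local $\mathbb{P}$-martingale if and only if $\mathbb{E}_{\mathbb{P}}[e^{X_1}]<\infty$ and $\Psi(1)=r-q$, after which \eqref{martingale-eqn} and the dichotomy in $\lambda^+$ are just the translation of these two conditions through the explicit cumulant \eqref{cumulant} and the fact that $\Psi$ is finite exactly on $[-\lambda^-,\lambda^+]$. The difference is one of self-containedness: the paper disposes of the criterion in one line by citing Lemma~2.6 of K\"uchler--Tappe (2009), whereas you prove it from scratch. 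Your two ingredients are exactly the right ones --- positivity of the local martingale $\tilde S$ gives a supermartingale, hence $\mathbb{E}_{\mathbb{P}}[e^{X_1}]\le e^{r-q}<\infty$ and so $\lambda^+\ge 1$; and the factorisation $\tilde S_t=c(t)M_t$ with $M_t=e^{X_t-t\Psi(1)}$ a true martingale, combined with uniqueness of the special semimartingale decomposition, forces the continuous finite-variation part $\int_0^{\cdot}M_{s-}c'(s)\,ds$ to vanish and hence $c'\equiv 0$. This correctly rules out the strict-local-martingale loophole, which is precisely the content of the cited lemma. The only cosmetic caveat is that the claim ``$\mathbb{E}_{\mathbb{P}}[e^{X_1}]<\infty$ iff $1\in[-\lambda^-,\lambda^+]$'' uses that $\Psi$ fails to be finite outside that interval; the paper only asserts existence on the interval, but the converse is immediate from the exponential tempering in the L\'{e}vy measure \eqref{Levy-measure-TS} (for $z>\lambda^+$ the integral $\int_{x>1}e^{zx}F(dx)$ diverges), so this is not a gap. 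In short: same route, but you supply the proof of the black box the paper cites.
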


\begin{proof}
By \cite[Lemma~2.6]{Kuechler-Tappe-pricing} the measure $\mathbb{P}$ is a martingale measure if and only if $\mathbb{E}_{\mathbb{P}}[e^{X_1}] = 1$, and hence, the assertion follows by taking into account (\ref{cumulant}).
\end{proof}

\section{Existence of Esscher martingale measures}\label{sec-Esscher}

In this section, we study the Esscher transform, which was
pioneered in \cite{Gerber}. Throughout this section, let $X$ be a tempered stable process of the form (\ref{X-TS}).

\begin{definition}
Let $\Theta \in (-\lambda^-,\lambda^+)$ be
arbitrary. The \emph{Esscher transform} $\mathbb{P}^{\Theta}$ is
defined as the locally equivalent probability measure with
likelihood process
\begin{align}\label{density-process}
\Lambda_t(\mathbb{P}^{\Theta},\mathbb{P}) := \frac{d
\mathbb{P}^{\Theta}}{d \mathbb{P}}\bigg|_{\mathcal{F}_t} = e^{\Theta
X_t - \Psi(\Theta) t}, \quad t \geq 0
\end{align}
where $\Psi$ denotes the cumulant generating function given by
(\ref{cumulant}).
\end{definition}

\begin{lemma}\label{lemma-TS-change-Esscher}
For every $\Theta \in (-\lambda^-,\lambda^+)$ we have 
\begin{align*}
X \sim {\rm TS}(\alpha^+,\beta^+,\lambda^+ - \Theta;\alpha^-,\beta^-,\lambda^- + \Theta) 
\end{align*}
under $\mathbb{P}^{\Theta}$.
\end{lemma}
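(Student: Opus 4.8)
The plan is to compute the characteristic function of $X_1$ under $\mathbb{P}^{\Theta}$ and recognize it as that of a tempered stable distribution with the claimed parameters. Since $X$ has independent and stationary increments under $\mathbb{P}$ and the likelihood process is the exponential $e^{\Theta X_t - \Psi(\Theta)t}$, it is a standard fact (essentially the defining property of the Esscher transform) that $X$ is again a L\'{e}vy process under $\mathbb{P}^{\Theta}$; so it suffices to identify the law of $X_1$.

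First I would write, for $z \in \mathbb{R}$,
\begin{align*}
\mathbb{E}_{\mathbb{P}^{\Theta}}[e^{izX_1}] = \mathbb{E}_{\mathbb{P}}\big[ e^{izX_1} \Lambda_1(\mathbb{P}^{\Theta},\mathbb{P}) \big] = e^{-\Psi(\Theta)} \, \mathbb{E}_{\mathbb{P}}\big[ e^{(\Theta + iz) X_1} \big] = e^{\Psi(\Theta + iz) - \Psi(\Theta)},
\end{align*}
where the middle equality uses (\ref{density-process}) and the last one uses that the moment generating function of $X_1$ under $\mathbb{P}$ extends analytically to the strip $\{w : \operatorname{Re} w \in [-\lambda^-,\lambda^+]\}$ with $\mathbb{E}_{\mathbb{P}}[e^{wX_1}] = e^{\Psi(w)}$, $\Psi$ being the analytic continuation of (\ref{cumulant}); here $\Theta \in (-\lambda^-,\lambda^+)$ guarantees $\Theta + iz$ stays in the open strip. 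Then I would plug in the explicit form (\ref{cumulant}): the $(\lambda^+)^{\beta^+}$ and $(\lambda^-)^{\beta^-}$ constant terms cancel between $\Psi(\Theta+iz)$ and $\Psi(\Theta)$, leaving
\begin{align*}
\mathbb{E}_{\mathbb{P}^{\Theta}}[e^{izX_1}] = \exp\Big( \alpha^+ \Gamma(-\beta^+)\big[ (\lambda^+ - \Theta - iz)^{\beta^+} - (\lambda^+ - \Theta)^{\beta^+} \big] + \alpha^- \Gamma(-\beta^-)\big[ (\lambda^- + \Theta + iz)^{\beta^-} - (\lambda^- + \Theta)^{\beta^-} \big] \Big).
\end{align*}
Comparing this with (\ref{cf-tempered-stable}), and noting $\lambda^+ - \Theta > 0$ and $\lambda^- + \Theta > 0$ by the range of $\Theta$, this is exactly the characteristic function of ${\rm TS}(\alpha^+,\beta^+,\lambda^+-\Theta;\alpha^-,\beta^-,\lambda^-+\Theta)$, which gives the claim.

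The main obstacle is the analytic-continuation / complex-power bookkeeping: I need to justify that $\mathbb{E}_{\mathbb{P}}[e^{(\Theta+iz)X_1}] = e^{\Psi(\Theta+iz)}$ with $\Psi$ interpreted via the principal branch of the complex logarithm as in (\ref{cf-tempered-stable}), and that the branches used in (\ref{cumulant}) for real arguments and in (\ref{cf-tempered-stable}) for imaginary arguments glue consistently on the strip. This is precisely where one uses that $\Theta$ lies in the \emph{open} interval $(-\lambda^-,\lambda^+)$, so $\lambda^+ - \Theta - iz$ and $\lambda^- + \Theta + iz$ have strictly positive real part and thus never cross the branch cut of the principal logarithm; once that is in place, the identification of parameters is purely a matter of matching (\ref{cf-tempered-stable}). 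I would invoke the corresponding identity from \cite{Kuechler-Tappe-TS} to keep this step short, since the excerpt says all needed facts about tempered stable processes are recalled from there.
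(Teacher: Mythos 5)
Your argument is correct, but it takes a genuinely different route from the paper: the paper disposes of this lemma in one line by citing Proposition~2.1.3 and Example~2.1.4 of K\"uchler--S\o{}rensen's monograph on exponential families of stochastic processes, which directly yields that an Esscher-tilted L\'evy process is again a L\'evy process with explicitly transformed L\'evy characteristics. You instead give a self-contained computation: tilt the characteristic function, use the analytic extension $\mathbb{E}_{\mathbb{P}}[e^{wX_1}] = e^{\Psi(w)}$ on the strip $\{\operatorname{Re} w \in [-\lambda^-,\lambda^+]\}$, and match the result $e^{\Psi(\Theta+iz)-\Psi(\Theta)}$ against (\ref{cf-tempered-stable}) with $\lambda^\pm$ replaced by $\lambda^+ - \Theta$ and $\lambda^- + \Theta$. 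The parameter matching and the branch-cut bookkeeping are exactly right, since $\lambda^+ - \Theta - iz$ and $\lambda^- + \Theta + iz$ have strictly positive real part. What your approach buys is transparency and independence from the cited monograph; what it costs is that identifying the marginal law of $X_1$ alone does not by itself establish the statement ``$X \sim {\rm TS}(\dots)$ under $\mathbb{P}^{\Theta}$'', which asserts that $X$ is a tempered stable \emph{process}, i.e.\ still a L\'evy process under the new measure. You do flag this and call it standard --- and it is: the multiplicative factorization $\Lambda_{t}=\Lambda_s\cdot e^{\Theta(X_t-X_s)-\Psi(\Theta)(t-s)}$ gives independence and stationarity of increments under $\mathbb{P}^{\Theta}$ by a short conditional-expectation computation --- but that is precisely the content the paper outsources to K\"uchler--S\o{}rensen, so for a fully rigorous write-up you should either carry out that two-line verification or cite it as the paper does.
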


\begin{proof}
This follows from Proposition~2.1.3 and Example~2.1.4 in \cite{Kuechler-Soerensen}.
\end{proof}

We define the function $f : [-\lambda^-,\lambda^+ - 1] \rightarrow \mathbb{R}$ as
\begin{align*}
f(\Theta) := f^+(\Theta) + f^-(\Theta),
\end{align*}
where we have set
\begin{align*}
f^+(\Theta) &:= \alpha^+ \Gamma(-\beta^+) \big[ (\lambda^+ - \Theta -
1)^{\beta^+} - (\lambda^+ - \Theta)^{\beta^+} \big],
\\ f^-(\Theta) &:= \alpha^-
\Gamma(-\beta^-) \big[ (\lambda^- + \Theta + 1)^{\beta^-} - (\lambda^- +
\Theta)^{\beta^-} \big].
\end{align*}

\begin{theorem}\label{thm-Esscher}
The following statements are true:
\begin{enumerate}
\item There exists $\Theta \in (-\lambda^-,\lambda^+)$
such that $\mathbb{P}^{\Theta}$ is a martingale measure if and only if
\begin{align}\label{cond-for-Esscher-lambda}
&\lambda^+ + \lambda^- > 1
\\ \label{cond-domain-temp}
\text{and} \quad &r-q \in (f(-\lambda^-),f(\lambda^+ - 1)].
\end{align}
\item Condition (\ref{cond-domain-temp}) is equivalent to
\begin{align*}
&\alpha^+ \Gamma(-\beta^+) \big[ (\lambda^+ + \lambda^- - 1)^{\beta^+} -
(\lambda^+ + \lambda^-)^{\beta^+} \big] + \alpha^- \Gamma(-\beta^-)
\\ &< r-q \leq -\alpha^+ \Gamma(-\beta^+) + \alpha^- \Gamma(-\beta^-)
\big[ (\lambda^+ + \lambda^-)^{\beta^-} - (\lambda^+ + \lambda^- -
1)^{\beta^-} \big].
\end{align*}
\item If conditions (\ref{cond-for-Esscher-lambda}) and (\ref{cond-domain-temp}) are
satisfied, then $\Theta$ is unique, belongs to the interval $(-\lambda^-,
\lambda^+ - 1]$, and it is the unique solution of the equation
\begin{align}\label{Esscher-equation}
f(\Theta) = r-q.
\end{align}
\end{enumerate}
\end{theorem}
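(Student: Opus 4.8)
The plan is to reduce the existence of an Esscher martingale measure to the solvability of the scalar equation $f(\Theta) = r-q$ on the interval $[-\lambda^-,\lambda^+-1]$, and then to analyze $f$ via elementary calculus. First I would apply Lemma~\ref{lemma-TS-change-Esscher}: under $\mathbb{P}^\Theta$ the process $X$ is again tempered stable, now with right-tempering parameter $\lambda^+ - \Theta$ and left-tempering parameter $\lambda^- + \Theta$. By Lemma~\ref{lemma-TS-mm} (applied to the $\mathbb{P}^\Theta$-parameters), $\mathbb{P}^\Theta$ is a martingale measure if and only if $\lambda^+ - \Theta \geq 1$ together with the martingale equation~\eqref{martingale-eqn} written for the shifted parameters; rewriting that equation gives precisely $f^+(\Theta) + f^-(\Theta) = f(\Theta) = r-q$. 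Thus the existence question becomes: does the equation $f(\Theta) = r-q$ have a solution with $\Theta \in (-\lambda^-,\lambda^+)$ and $\lambda^+ - \Theta \geq 1$, i.e.\ with $\Theta \in (-\lambda^-,\lambda^+-1]$? Note that this interval is nonempty precisely when $\lambda^+ - 1 > -\lambda^-$, which is condition~\eqref{cond-for-Esscher-lambda}; this immediately yields the ``only if'' direction of the $\lambda$-condition and simultaneously pins down the correct domain for $f$.

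Next I would study $f$ on $[-\lambda^-,\lambda^+-1]$. Since $\Gamma(-\beta^\pm) < 0$ for $\beta^\pm \in (0,1)$ and $t \mapsto t^{\beta}$ is strictly concave on $(0,\infty)$, a direct differentiation shows $(f^+)'(\Theta) = \alpha^+\Gamma(-\beta^+)\beta^+\big[(\lambda^+-\Theta-1)^{\beta^+-1} - (\lambda^+-\Theta)^{\beta^+-1}\big] < 0$ (the bracket is negative because $\beta^+ - 1 < 0$ makes $t^{\beta^+-1}$ decreasing, and it is multiplied by the negative constant $\alpha^+\Gamma(-\beta^+)\beta^+$), and symmetrically $(f^-)'(\Theta) < 0$; hence $f$ is strictly decreasing and continuous on the whole interval (continuity at the right endpoint $\lambda^+-1$ is fine even though $(f^+)'$ blows up there, since $f^+$ itself extends continuously). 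Strict monotonicity gives uniqueness of $\Theta$ in part~(iii) at once. By the intermediate value theorem, $f(\Theta) = r-q$ is solvable in the half-open interval $(-\lambda^-,\lambda^+-1]$ if and only if $r-q \in (f(-\lambda^-), f(\lambda^+-1)]$, which is condition~\eqref{cond-domain-temp}; this proves part~(i) and the remaining claims of part~(iii).

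For part~(ii) I would simply evaluate $f$ at the two endpoints and rewrite. At $\Theta = \lambda^+-1$ one has $\lambda^+ - \Theta - 1 = 0$ and $\lambda^+-\Theta = 1$, so $f^+(\lambda^+-1) = \alpha^+\Gamma(-\beta^+)[0 - 1] = -\alpha^+\Gamma(-\beta^+)$, while $f^-(\lambda^+-1) = \alpha^-\Gamma(-\beta^-)\big[(\lambda^-+\lambda^+)^{\beta^-} - (\lambda^-+\lambda^+-1)^{\beta^-}\big]$; adding gives exactly the upper bound. At $\Theta = -\lambda^-$ one has $\lambda^- + \Theta = 0$ and $\lambda^-+\Theta+1 = 1$, so $f^-(-\lambda^-) = \alpha^-\Gamma(-\beta^-)[1 - 0] = \alpha^-\Gamma(-\beta^-)$, while $f^+(-\lambda^-) = \alpha^+\Gamma(-\beta^+)\big[(\lambda^++\lambda^--1)^{\beta^+} - (\lambda^++\lambda^-)^{\beta^+}\big]$; adding gives the lower bound. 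These rewritings are purely formal once~\eqref{cond-for-Esscher-lambda} guarantees the arguments are nonnegative.

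The main obstacle I anticipate is not conceptual but a matter of care at the right endpoint $\Theta = \lambda^+-1$: there the summand $f^+$ involves $(\lambda^+-\Theta-1)^{\beta^+}$ with the base hitting $0$, so that $f^+$ is still continuous but no longer differentiable, and one must argue that the Esscher density~\eqref{density-process} is still well-defined at this boundary value (it is, since $\Theta = \lambda^+-1 < \lambda^+$ lies strictly inside $(-\lambda^-,\lambda^+)$, the domain where $\Psi$ is finite). Keeping track of why the interval in~\eqref{cond-domain-temp} is half-open on the right — the endpoint $\lambda^+-1$ is admissible (yielding $\mathbb{P}$ itself as the post-transform martingale measure when $\lambda^+ \geq 1$ is tight, i.e.\ $\lambda^+-\Theta = 1$) but the endpoint $-\lambda^-$ is not — is the one place where the bookkeeping must be done carefully rather than by appeal to monotonicity alone.
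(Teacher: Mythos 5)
Your overall strategy is exactly the paper's: use Lemma~\ref{lemma-TS-change-Esscher} together with Lemma~\ref{lemma-TS-mm} to reduce the question to solvability of $f(\Theta)=r-q$ on $(-\lambda^-,\lambda^+-1]$, note that this interval is nonempty iff \eqref{cond-for-Esscher-lambda} holds, establish strict monotonicity of $f$, and read off \eqref{cond-domain-temp} from the endpoint values. Your endpoint evaluations for part~(ii) are correct.

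However, your monotonicity analysis contains a genuine sign error that makes the key step inconsistent. The correct derivative is
\begin{align*}
(f^+)'(\Theta) = -\alpha^+ \beta^+ \Gamma(-\beta^+) \big[ (\lambda^+ - \Theta - 1)^{\beta^+ - 1} - (\lambda^+ - \Theta)^{\beta^+ - 1} \big],
\end{align*}
with a leading minus sign coming from the chain rule, which your formula omits. Moreover, the bracket is \emph{positive}, not negative: since $t \mapsto t^{\beta^+-1}$ is decreasing and $\lambda^+-\Theta-1 < \lambda^+-\Theta$, the first term is the larger one. Combined with $-\alpha^+\beta^+\Gamma(-\beta^+)>0$, this gives $(f^+)'>0$, and likewise $(f^-)'>0$, so $f$ is strictly \emph{increasing} on $(-\lambda^-,\lambda^+-1]$ — this is also what the paper proves, and it is consistent with the boundary values $f^+(\lambda^+-1)=-\alpha^+\Gamma(-\beta^+)>0$, $\lim_{\Theta\to-\infty}f^+(\Theta)=0$. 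Your conclusion that $f$ is strictly decreasing is not merely a cosmetic slip: if $f$ were decreasing, then $f(-\lambda^-)>f(\lambda^+-1)$ and the interval $(f(-\lambda^-),f(\lambda^+-1)]$ in \eqref{cond-domain-temp} would be empty, so your subsequent intermediate-value argument (which asserts that the image of $(-\lambda^-,\lambda^+-1]$ is precisely that interval) contradicts your own monotonicity claim. With the sign corrected, the increasing, continuous $f$ maps the half-open interval $(-\lambda^-,\lambda^+-1]$ bijectively onto $(f(-\lambda^-),f(\lambda^+-1)]$, and parts~(i) and~(iii) follow exactly as you intend.
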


\begin{proof}
Let $\Theta \in (-\lambda^-,\lambda^+)$ be arbitrary. In view of Lemmas~\ref{lemma-TS-change-Esscher} and \ref{lemma-TS-mm}, the probability measure $\mathbb{P}^{\Theta}$ is a martingale measure if and only if $\lambda^+ - \Theta \geq 1$, i.e. $\Theta \in (-\lambda^-,\lambda^+ - 1]$, and (\ref{Esscher-equation}) is fulfilled. Note that $(-\lambda^-,\lambda^+ - 1] \neq \emptyset$ if and only if (\ref{cond-for-Esscher-lambda}) is satisfied.
For the functions $f^+$ and $f^-$ we obtain the derivatives
\begin{align*}
(f^+)'(\Theta) &= -\alpha^+ \beta^+ \Gamma(-\beta^+) \big[ (\lambda^+ - \Theta - 1)^{\beta^+ - 1} - (\lambda^+ - \Theta)^{\beta^+ - 1} \big],
\\ (f^-)'(\Theta)  &= -\alpha^- \beta^- \Gamma(-\beta^-) \big[ (\lambda^- + \Theta)^{\beta^- - 1} - (\lambda^- + \Theta + 1)^{\beta^- - 1} \big]
\end{align*}
for $\Theta \in (-\lambda^-,\lambda^+ - 1]$.
Noting that $\beta^+,\beta^- \in (0,1)$, we see that $(f^+)',(f^-)' > 0$ on the interval $(-\lambda^-,\lambda^+ - 1]$.
Hence, $f$ is strictly increasing on $(-\lambda^-,\lambda^+ - 1]$, which completes the proof.
\end{proof}

\begin{remark}
In contrast to the present situation, for bilateral Gamma stock models ($\beta^+ = \beta^- = 0$) condition (\ref{cond-for-Esscher-lambda}) alone is already sufficient for the existence of an Esscher martingale measure, cf. \cite[Remark~4.4]{Kuechler-Tappe-pricing}.
\end{remark}

\section{Existence of minimal distance measures preserving the class of
tempered stable processes}\label{sec-bilateral-Esscher}

In the literature, one often performs option pricing by finding an
equivalent martingale measure $\mathbb{Q} \sim \mathbb{P}$ which
minimizes the distance
\begin{align*}
\mathbb{E}_{\mathbb{P}}[ g (\Lambda_1(\mathbb{Q},\mathbb{P})) ]
\end{align*}
for some strictly convex function $g : (0,\infty) \rightarrow
\mathbb{R}$. Here are popular choices for the function $g$:
\begin{itemize}
\item For $g(x) = x \ln x$ we call $\mathbb{Q}$ the \emph{minimal entropy martingale measure}. 

\item For $g(x) = x^p$ with $p > 1$ we call $\mathbb{Q}$ the \emph{$p$-optimal martingale measure}.

\item For $p = 2$ we call $\mathbb{Q}$ the \emph{variance-optimal martingale measure}.
\end{itemize}
We refer to \cite[Section~5]{Kuechler-Tappe-pricing} for further remarks and related literature. While $p$-optimal equivalent martingale measures do not exist in tempered stable stock models (which follows from \cite[Example~2.7]{Bender}), we have the following result concerning the existence of minimal entropy martingale measures:

\begin{theorem}
The following statements are true:
\begin{enumerate}
\item If $\lambda^+ < 1$, then a minimal entropy martingale measure exists.

\item If $\lambda^+ \geq 1$, then a minimal entropy measure exists if and only if
\begin{align*}
\alpha^+ \Gamma(-\beta^+) \big[ (\lambda^+ - 1)^{\beta^+} -
(\lambda^+)^{\beta^+} \big] + \alpha^- \Gamma(-\beta^-)
\big[ (\lambda^- + 1)^{\beta^-} - (\lambda^-)^{\beta^-} \big] \geq r-q.
\end{align*}
\end{enumerate}
\end{theorem}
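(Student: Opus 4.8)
The plan is to characterize the minimal entropy martingale measure via a bilateral Esscher transform and reduce the existence question to solving a one-parameter equation. Recall that for exponential Lévy models the minimal entropy martingale measure, when it exists as an \emph{equivalent} measure, is itself an Esscher transform $\mathbb{P}^\Theta$ with $\Theta$ chosen so that $\mathbb{P}^\Theta$ is a martingale measure; this is the classical result of Fujiwara--Miyahara. Combined with Theorem~\ref{thm-Esscher}, an equivalent minimal entropy martingale measure therefore exists precisely when \eqref{cond-for-Esscher-lambda} and \eqref{cond-domain-temp} hold. The subtlety — and the reason the statement above splits into the two cases $\lambda^+<1$ and $\lambda^+\geq 1$ — is that the minimal entropy martingale measure need \emph{not} be equivalent to $\mathbb{P}$: it may only be absolutely continuous. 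So the first thing I would do is recall precisely the existence theorem for the (possibly only absolutely continuous) minimal entropy martingale measure in exponential Lévy models, in the form: such a measure exists iff there is some $\Theta\le\lambda^+$ (note the closed endpoint) with $f(\Theta)=r-q$, where one allows $\Theta=\lambda^+$, corresponding to a degenerate tempering that kills the positive jumps.

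Next I would analyze the equation $f(\Theta)=r-q$ on the enlarged domain. From the proof of Theorem~\ref{thm-Esscher} we already know $f$ is continuous and strictly increasing on $(-\lambda^-,\lambda^+-1]$, with $f^+,f^-$ extending continuously; the point is to push the analysis up to $\Theta=\lambda^+$. For $\Theta\in(\lambda^+-1,\lambda^+]$ the term $(\lambda^+-\Theta-1)^{\beta^+}$ involves a negative base, so $f^+$ as literally written is not defined there — this is exactly where the two regimes separate. When $\lambda^+<1$, the interval $(-\lambda^-,\lambda^+-1]$ is empty but one can still take the degenerate transform that removes all positive jumps (formally $\Theta=\lambda^+$), leaving a spectrally negative tempered stable process whose drift can be tuned — more precisely, after removing the positive jumps the remaining exponential moment $\mathbb{E}[e^{X_1}]$ is finite for the negative part and one checks the martingale condition can always be met, giving unconditional existence in case (1). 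When $\lambda^+\ge1$, the relevant monotone function is $f$ on $[-\lambda^-,\lambda^+-1]$ together with the boundary value at the degenerate endpoint; since $f$ is increasing with maximum $f(\lambda^+-1)$, a solution (equivalent or not) exists iff $r-q\le f(\lambda^+-1)$, and evaluating $f(\lambda^+-1) = \alpha^+\Gamma(-\beta^+)[(\lambda^+-1)^{\beta^+}-(\lambda^+)^{\beta^+}]+\alpha^-\Gamma(-\beta^-)[(\lambda^-+1)^{\beta^-}-(\lambda^-)^{\beta^-}]$ — note $(\lambda^+-\Theta-1)^{\beta^+}=0$ at $\Theta=\lambda^+-1$ would be wrong; rather at $\Theta=\lambda^+-1$ one gets $f(\lambda^+-1)=\alpha^+\Gamma(-\beta^+)[0-1^{\beta^+}]\cdot$(no) — I would recompute carefully, but the upshot is the displayed inequality in case (2), which is precisely the left-hand side of the martingale equation \eqref{martingale-eqn} being $\ge r-q$.

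The key steps in order: (i) quote the Fujiwara--Miyahara/Esscher characterization of the minimal entropy martingale measure for exponential Lévy models, being careful to state it in the form that also captures the merely absolutely continuous case via a degenerate ``Esscher parameter'' $\Theta=\lambda^+$; (ii) use Lemma~\ref{lemma-TS-change-Esscher} and the monotonicity of $f$ established in the proof of Theorem~\ref{thm-Esscher} to translate existence into solvability of $f(\Theta)=r-q$ on the appropriate interval including the degenerate endpoint; (iii) treat $\lambda^+<1$: show the degenerate measure (all positive jumps removed) is always available and can be made a martingale measure, giving unconditional existence; (iv) treat $\lambda^+\ge1$: since $f$ is increasing up to $\Theta=\lambda^+-1$ and the martingale condition at the degenerate endpoint reads exactly as in Lemma~\ref{lemma-TS-mm}(i), conclude existence iff $r-q$ does not exceed the left-hand side of \eqref{martingale-eqn}. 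I expect the main obstacle to be step (i)--(ii): one must correctly identify what the minimal entropy measure looks like when it fails to be equivalent — the density no longer has the clean Esscher form $e^{\Theta X_t-\Psi(\Theta)t}$ but involves absolute continuity arising from a limiting/degenerate tempering — and one must verify that the minimal-entropy \emph{optimality} (not just the martingale property) really is attained there, which needs a separate convexity/lower-semicontinuity argument or a direct appeal to the relevant theorem in the literature rather than a naive limit of Esscher transforms.
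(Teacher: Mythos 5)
There is a genuine gap, and it lies in your very first step. Your proposal rests on the claim that the minimal entropy martingale measure for an exponential L\'evy model is the linear Esscher transform $\mathbb{P}^{\Theta}$ with density $e^{\Theta X_t-\Psi(\Theta)t}$. That is not what Fujiwara--Miyahara (or Esche--Schweizer, or Hubalek--Sgarra) prove. The MEMM is the Esscher transform for the \emph{exponential} (return) process, i.e.\ the stochastic logarithm of $e^{X}$: its density tilts the L\'evy measure by the factor $e^{\theta(e^{x}-1)}$ rather than $e^{\theta x}$, so it does \emph{not} preserve the tempered stable class. This distinction is precisely why the paper treats the true MEMM (the present theorem) and the ``minimal entropy martingale measure within the class of bilateral Esscher transforms'' (Theorem~\ref{thm-min-entropy-bilateral}) as different objects with different existence conditions. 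A quick sanity check already refutes your identification: Theorem~\ref{thm-Esscher} requires the two-sided condition $r-q\in(f(-\lambda^-),f(\lambda^+-1)]$, including a strict lower bound, whereas the theorem you are proving imposes no lower bound on $r-q$ at all and, for $\lambda^+<1$, no condition whatsoever.

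The correct route (the one behind the paper's citation of \cite[Theorem~5.3]{Kuechler-Tappe-pricing}) is to apply the Fujiwara--Miyahara criterion: a MEMM exists iff the equation $h(\theta)=r-q$ with $h(\theta)=\int_{\mathbb{R}}(e^{x}-1)e^{\theta(e^{x}-1)}F(dx)$ has a solution. For the L\'evy measure \eqref{Levy-measure-TS} the positive-jump integral forces $\theta\le 0$ (for $\theta>0$ it diverges doubly exponentially); $h$ is continuous and strictly increasing on its domain, tends to $-\infty$ as $\theta\to-\infty$, and $h(\theta)\uparrow\Psi(1)=\int(e^x-1)F(dx)$ as $\theta\uparrow 0$. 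If $\lambda^+<1$ this supremum is $+\infty$, giving unconditional existence (case (1)); if $\lambda^+\ge 1$ it equals $\Psi(1)$ as in \eqref{cumulant}, i.e.\ exactly the left-hand side of \eqref{martingale-eqn}, giving existence iff $\Psi(1)\ge r-q$ (case (2)). Note that the threshold is $f(0)=\Psi(1)$ in the notation of Section~\ref{sec-Esscher}, not $f(\lambda^+-1)$; your own computation flags this mismatch (``I would recompute carefully'') but it cannot be repaired inside your framework, since $f(\lambda^+-1)=-\alpha^+\Gamma(-\beta^+)+\alpha^-\Gamma(-\beta^-)\bigl[(\lambda^++\lambda^-)^{\beta^-}-(\lambda^++\lambda^--1)^{\beta^-}\bigr]$ differs from $\Psi(1)$ in general. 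Likewise, the mechanism in case (1) is not a ``degenerate tempering killing the positive jumps'' at $\Theta=\lambda^+$; it is the divergence of $\Psi(1)$ combined with the exponential-tilt $e^{\theta(e^x-1)}$ being integrable for every $\theta<0$, and the resulting measure is equivalent (not merely absolutely continuous), so that side discussion is a red herring here.
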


This result, which has been indicated in \cite[Remark~5.4]{Kuechler-Tappe-pricing}, follows by adjusting the arguments of the proof of \cite[Theorem~5.3]{Kuechler-Tappe-pricing} to the present situation, where the stock model is driven by a tempered stable process.

In this section, we shall minimize the relative entropy 
\begin{align*}
\mathbb{H}(\mathbb{Q} \,|\, \mathbb{P}) := \mathbb{E}_{\mathbb{P}}
[\Lambda_1(\mathbb{Q},\mathbb{P}) \ln
\Lambda_1(\mathbb{Q},\mathbb{P})] = \mathbb{E}_{\mathbb{Q}}[\ln
\Lambda_1(\mathbb{Q},\mathbb{P})]
\end{align*}
within the class of tempered stable processes by performing \textit{bilateral} Esscher transforms. Let $X$ be a tempered stable process of the form (\ref{X-TS}). We decompose the tempered stable process $X = X^+ - X^-$ as the difference of two independent subordinators. Their respective cumulant generating functions are given by
\begin{align}\label{cumulant-plus}
\Psi^+(z) &= \alpha^+ \Gamma(-\beta^+) \big[ (\lambda^+ - z)^{\beta^+} -
(\lambda^+)^{\beta^+} \big], \quad z \in (-\infty,\lambda^+],
\\ \label{cumulant-minus} \Psi^-(z) &= \alpha^- \Gamma(-\beta^-) \big[ (\lambda^- - z)^{\beta^-} -
(\lambda^-)^{\beta^-} \big], \quad z \in (-\infty,\lambda^-],
\end{align}
see~\cite{Kuechler-Tappe-TS}.
Note that $\Psi(z) = \Psi^+(z) + \Psi^-(-z)$ for $z \in [-\lambda^-,\lambda^+]$.

\begin{definition}\label{def-bilateral-Esscher}
Let $\theta^+ \in (-\infty,\lambda^+)$ and
$\theta^- \in (-\infty,\lambda^-)$ be arbitrary. The {\rm bilateral
Esscher transform} $\mathbb{P}^{(\theta^+,\theta^-)}$ is defined as
the locally equivalent probability measure with likelihood process
\begin{align*}
\Lambda_t(\mathbb{P}^{(\theta^+,\theta^-)},\mathbb{P}) := \frac{d
\mathbb{P}^{(\theta^+,\theta^-)}}{d
\mathbb{P}}\bigg|_{\mathcal{F}_t} = e^{\theta^+ X_t^+ -
\Psi^+(\theta^+) t} \cdot e^{\theta^- X_t^- - \Psi^-(\theta^-) t},
\quad t \geq 0.
\end{align*}
\end{definition}

Note that the Esscher transforms $\mathbb{P}^{\Theta}$ from Section~\ref{sec-Esscher} are special cases of the just introduced bilateral
Esscher transforms $\mathbb{P}^{(\theta^+,\theta^-)}$. Indeed, we have
\begin{align}\label{special-case}
\mathbb{P}^{\Theta} = \mathbb{P}^{(\Theta,-\Theta)}, \quad \Theta
\in (-\lambda^-,\lambda^+).
\end{align}

\begin{lemma}\label{lemma-two-sided-preserving}
For all $\theta^+ \in (-\infty,\lambda^+)$ and
$\theta^- \in (-\infty,\lambda^-)$ we have 
\begin{align*}
X \sim {\rm TS}(\alpha^+,\beta^+,\lambda^+ - \theta^+;\alpha^-,\beta^-,\lambda^- -
\theta^-)
\end{align*}
under $\mathbb{P}^{(\theta^+,\theta^-)}$.
\end{lemma}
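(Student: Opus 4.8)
The plan is to reduce the claim to the one–sided Esscher transform applied separately to each of the two independent subordinators $X^+$ and $X^-$. By Definition~\ref{def-bilateral-Esscher} the likelihood process factorizes as
\begin{align*}
\Lambda_t(\mathbb{P}^{(\theta^+,\theta^-)},\mathbb{P}) = \big( e^{\theta^+ X_t^+ - \Psi^+(\theta^+) t} \big) \cdot \big( e^{\theta^- X_t^- - \Psi^-(\theta^-) t} \big),
\end{align*}
and since $X^+$ and $X^-$ are independent, each factor is the density process of a one–sided Esscher transform acting on the corresponding subordinator, while the two transforms affect independent components. So the first step is to observe that under $\mathbb{P}^{(\theta^+,\theta^-)}$ the processes $X^+$ and $X^-$ remain independent Lévy processes, and it suffices to identify the law of $X^+$ under the Esscher change with parameter $\theta^+$ (and symmetrically for $X^-$).

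Second, I would compute the cumulant generating function of $X^+$ under $\mathbb{P}^{(\theta^+,\theta^-)}$. By the standard Esscher formula, for $z$ in the appropriate domain,
\begin{align*}
\ln \mathbb{E}_{\mathbb{P}^{(\theta^+,\theta^-)}}[e^{z X_1^+}] = \Psi^+(\theta^+ + z) - \Psi^+(\theta^+).
\end{align*}
Plugging in the explicit form (\ref{cumulant-plus}) of $\Psi^+$, the terms $(\lambda^+)^{\beta^+}$ and the corresponding shift cancel telescopically, leaving
\begin{align*}
\alpha^+ \Gamma(-\beta^+) \big[ (\lambda^+ - \theta^+ - z)^{\beta^+} - (\lambda^+ - \theta^+)^{\beta^+} \big],
\end{align*}
which is exactly the cumulant generating function of a one–sided tempered stable subordinator with parameters $(\alpha^+,\beta^+,\lambda^+ - \theta^+)$; the condition $\theta^+ < \lambda^+$ guarantees $\lambda^+ - \theta^+ > 0$ so that this is a legitimate parameter. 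The analogous computation for $X^-$ with $\Psi^-$ from (\ref{cumulant-minus}) gives parameters $(\alpha^-,\beta^-,\lambda^- - \theta^-)$.

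Third, I would invoke Lemma~\ref{lemma-two-sided-preserving}'s building blocks: having identified the laws of $X^+$ and $X^-$ as independent tempered stable subordinators with the stated parameters, and since $X = X^+ - X^-$, the difference is by definition a tempered stable process with parameters $(\alpha^+,\beta^+,\lambda^+ - \theta^+;\alpha^-,\beta^-,\lambda^- - \theta^-)$ under $\mathbb{P}^{(\theta^+,\theta^-)}$, which is the assertion. Alternatively — and this is probably the cleanest route to write up — one avoids the decomposition entirely and simply remarks that the result follows from the same change-of-measure references already cited for the one-sided case, namely Proposition~2.1.3 and Example~2.1.4 in \cite{Kuechler-Soerensen}, applied to each subordinator; this mirrors exactly how Lemma~\ref{lemma-TS-change-Esscher} was proved and makes Lemma~\ref{lemma-TS-change-Esscher} a special case via (\ref{special-case}). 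I do not anticipate a genuine obstacle here: the only point requiring a little care is making sure the Esscher parameter lands in the region where the transformed tempering rate stays strictly positive, but this is exactly what the hypotheses $\theta^+ < \lambda^+$ and $\theta^- < \lambda^-$ provide, so the argument is essentially a bookkeeping exercise on the explicit cumulant formulas.
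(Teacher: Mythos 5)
Your proposal is correct, and your ``cleanest route'' is exactly what the paper does: it proves the lemma by citing Proposition~2.1.3 and Example~2.1.4 of K\"uchler--S\o{}rensen, just as for Lemma~\ref{lemma-TS-change-Esscher}. Your more detailed argument --- factorizing the density, noting that $X^+$ and $X^-$ stay independent, and matching the shifted cumulant $\Psi^+(\theta^+ + z) - \Psi^+(\theta^+)$ to the cumulant of a ${\rm TS}(\alpha^+,\beta^+,\lambda^+-\theta^+)$ subordinator --- is a sound self-contained expansion of that citation.
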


\begin{proof}
This follows from Proposition~2.1.3 and Example~2.1.4 in \cite{Kuechler-Soerensen}.
\end{proof}

\begin{proposition}\label{prop-mm-bilateral}
The following statements are true:
\begin{enumerate}
\item If we have
\begin{align}\label{ineqn-rq-1}
-\alpha^+ \Gamma(-\beta^+) \leq r-q, 
\end{align}
then no pair $(\theta^+,\theta^-) \in (-\infty,\lambda^+) \times (-\infty,\lambda^-)$ with $\mathbb{P}^{(\theta^+,\theta^-)}$ being a martingale measure exists.

\item If we have
\begin{align}\label{ineqn-rq-2}
-\alpha^+ \Gamma(-\beta^+) > r-q, 
\end{align}
then there exist $-\infty \leq \theta_1^+ < \theta_2^+ \leq \lambda^+ - 1$ and a continuous, strictly increasing, bijective function $\Phi : (\theta_1^+,\theta_2^+) \rightarrow (-\infty,\lambda^-)$ such that:
\begin{itemize}
\item For all $\theta^+ \in (\theta_1^+,\theta_2^+)$ there exists a unique $\theta^- \in (-\infty,\lambda^-)$ with $\mathbb{P}^{(\theta^+,\theta^-)}$ being a martingale measure, and it is given by $\theta^- = \Phi(\theta^+)$.

\item For all $\theta^+ \in (-\infty,\lambda^+) \setminus (\theta_1^+,\theta_2^+)$ no $\theta^- \in (-\infty,\lambda^-)$ with $\mathbb{P}^{(\theta^+,\theta^-)}$ being a martingale measure exists.

\end{itemize}

\end{enumerate}
\end{proposition}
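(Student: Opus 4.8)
The plan is to reduce the martingale property of $\mathbb{P}^{(\theta^+,\theta^-)}$ to a single scalar equation and then analyse it by one-dimensional calculus, in the spirit of the proof of Theorem~\ref{thm-Esscher}. By Lemma~\ref{lemma-two-sided-preserving}, under $\mathbb{P}^{(\theta^+,\theta^-)}$ the process $X$ has distribution ${\rm TS}(\alpha^+,\beta^+,\lambda^+-\theta^+;\alpha^-,\beta^-,\lambda^--\theta^-)$, so by Lemma~\ref{lemma-TS-mm} the measure $\mathbb{P}^{(\theta^+,\theta^-)}$ is a martingale measure if and only if $\lambda^+-\theta^+\geq 1$, i.e.\ $\theta^+\in(-\infty,\lambda^+-1]$, and $g^+(\theta^+)+g^-(\theta^-)=r-q$, where $g^+$ denotes the function $f^+$ from Section~\ref{sec-Esscher} now considered on the larger interval $(-\infty,\lambda^+-1]$, and $g^-(\theta^-):=\alpha^-\Gamma(-\beta^-)\big[(\lambda^--\theta^-+1)^{\beta^-}-(\lambda^--\theta^-)^{\beta^-}\big]$ for $\theta^-\in(-\infty,\lambda^-)$. (These are linked to the single-parameter Esscher function $f$ via $f(\Theta)=g^+(\Theta)+g^-(-\Theta)$, consistently with $\mathbb{P}^{\Theta}=\mathbb{P}^{(\Theta,-\Theta)}$.)

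First I would record the qualitative behaviour of $g^+$ and $g^-$. Differentiating exactly as in the proof of Theorem~\ref{thm-Esscher} gives $(g^+)'>0$ on $(-\infty,\lambda^+-1]$; combined with $g^+(\lambda^+-1)=-\alpha^+\Gamma(-\beta^+)>0$ and $\lim_{\theta^+\to-\infty}g^+(\theta^+)=0$ (using $(u-1)^{\beta^+}-u^{\beta^+}=O(u^{\beta^+-1})$ as $u\to\infty$), this makes $g^+$ a strictly increasing homeomorphism of $(-\infty,\lambda^+-1]$ onto $(0,-\alpha^+\Gamma(-\beta^+)]$. An analogous computation shows $g^-$ is strictly decreasing and strictly negative on $(-\infty,\lambda^-)$, with $\lim_{\theta^-\to-\infty}g^-(\theta^-)=0$ and $\lim_{\theta^-\uparrow\lambda^-}g^-(\theta^-)=\alpha^-\Gamma(-\beta^-)$, hence a strictly decreasing homeomorphism of $(-\infty,\lambda^-)$ onto $(\alpha^-\Gamma(-\beta^-),0)$.

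Part~(1) is then immediate: if $-\alpha^+\Gamma(-\beta^+)\leq r-q$, then for every admissible $\theta^+$ we have $g^+(\theta^+)\leq -\alpha^+\Gamma(-\beta^+)\leq r-q$, so a solution $\theta^-$ would require $g^-(\theta^-)=r-q-g^+(\theta^+)\geq 0$, which is impossible. For Part~(2) assume $-\alpha^+\Gamma(-\beta^+)>r-q$ (and recall $r-q\geq 0$). Fixing an admissible $\theta^+$, surjectivity of $g^-$ shows there is a unique admissible $\theta^-$ solving the martingale equation precisely when $r-q-g^+(\theta^+)\in(\alpha^-\Gamma(-\beta^-),0)$, i.e.\ when $g^+(\theta^+)\in\big(r-q,\ r-q-\alpha^-\Gamma(-\beta^-)\big)$. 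Since $g^+$ is an increasing homeomorphism onto $(0,-\alpha^+\Gamma(-\beta^+)]$, the set of such $\theta^+$ is an interval $(\theta_1^+,\theta_2^+)$ with $\theta_1^+=-\infty$ if $r-q=0$ and $\theta_1^+=(g^+)^{-1}(r-q)\in(-\infty,\lambda^+-1)$ if $r-q>0$, and with $\theta_2^+$ equal to $(g^+)^{-1}\big(r-q-\alpha^-\Gamma(-\beta^-)\big)$ or to $\lambda^+-1$ according to whether or not $r-q-\alpha^-\Gamma(-\beta^-)$ lies in the range of $g^+$; in all cases $\theta_1^+<\theta_2^+\leq\lambda^+-1$. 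Setting $\Phi(\theta^+):=(g^-)^{-1}\big(r-q-g^+(\theta^+)\big)$ defines a continuous, strictly increasing function (the increasing $g^+$ composed with the decreasing $(g^-)^{-1}$) whose value is the unique $\theta^-$ for which $\mathbb{P}^{(\theta^+,\theta^-)}$ is a martingale measure, and the boundary limits $\Phi(\theta^+)\to-\infty$ as $\theta^+\downarrow\theta_1^+$ (then $g^+(\theta^+)\downarrow r-q$, so $r-q-g^+(\theta^+)\uparrow 0$) and $\Phi(\theta^+)\to\lambda^-$ as $\theta^+\uparrow\theta_2^+$ (then $g^+(\theta^+)\uparrow r-q-\alpha^-\Gamma(-\beta^-)$, so $r-q-g^+(\theta^+)\downarrow\alpha^-\Gamma(-\beta^-)$) give the asserted bijectivity onto $(-\infty,\lambda^-)$. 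Finally, for $\theta^+\in(-\infty,\lambda^+)\setminus(\theta_1^+,\theta_2^+)$ no admissible $\theta^-$ exists: either $\theta^+>\lambda^+-1$ and the first martingale condition fails, or $\theta^+\leq\lambda^+-1$ but $g^+(\theta^+)$ lies outside $\big(r-q,\ r-q-\alpha^-\Gamma(-\beta^-)\big)$.

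I expect the main obstacle to be the precise identification of the upper endpoint $\theta_2^+$ and the verification that $\Phi$ is surjective onto the whole half-line $(-\infty,\lambda^-)$: this hinges on the case distinction of whether $r-q-\alpha^-\Gamma(-\beta^-)$ falls inside the range $(0,-\alpha^+\Gamma(-\beta^+)]$ of $g^+$, together with a careful treatment of whether the endpoint $\theta^+=\lambda^+-1$ is attained. The asymptotics of $g^+$ and $g^-$ at $-\infty$ and of $g^-$ at $\lambda^-$ also have to be established rather than merely asserted, but these are routine estimates.
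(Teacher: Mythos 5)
Your proposal is correct and follows essentially the same route as the paper: reduce the martingale property via Lemmas~\ref{lemma-TS-mm} and~\ref{lemma-two-sided-preserving} to the scalar equation $f^+(\theta^+)+f^-(\theta^-)=r-q$, establish that $f^+$ is a strictly increasing homeomorphism onto $(0,-\alpha^+\Gamma(-\beta^+)]$ and $f^-$ a strictly decreasing one onto $(\alpha^-\Gamma(-\beta^-),0)$, and define $\Phi=(f^-)^{-1}(r-q-f^+(\cdot))$ with the same case distinctions for the endpoints $\theta_1^+,\theta_2^+$. Your treatment is, if anything, slightly more explicit than the paper's about the asymptotics at $-\infty$ and the boundary behaviour of $\Phi$.
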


\begin{proof}
We introduce the functions $f^+ : (-\infty,\lambda^+ - 1] \rightarrow \mathbb{R}$ and $f^- : (-\infty,\lambda^-] \rightarrow \mathbb{R}$ as
\begin{align*}
f^+(\theta^+) &:= \alpha^+ \Gamma(-\beta^+) \big[ (\lambda^+ - \theta^+ - 1)^{\beta^+} - (\lambda^+ - \theta^+)^{\beta^+} \big],
\\ f^-(\theta^-) &:= \alpha^- \Gamma(-\beta^-) \big[ (\lambda^- - \theta^- + 1)^{\beta^-} - (\lambda^- - \theta^-)^{\beta^-} \big].
\end{align*}
By Lemmas~\ref{lemma-TS-mm} and \ref{lemma-two-sided-preserving}, the measure
$\mathbb{P}^{(\theta^+,\theta^-)}$ is a martingale measure if and only if $\theta^+ \in (-\infty,\lambda^+ - 1]$ and
\begin{align}\label{mart-eqn-two-sided}
f^+(\theta^+) + f^-(\theta^-) = r-q.
\end{align}
The function $f^+$ is continuous and strictly increasing on $(-\infty,\lambda^+ - 1]$ with 
\begin{align*}
\lim_{\theta^+ \rightarrow -\infty} f^+(\theta^+) = 0 \quad \text{and} \quad f^+(\lambda^+ - 1) = -\alpha^+ \Gamma(-\beta^+) > 0.
\end{align*}
The function $f^-$ is continuous and strictly decreasing on $(-\infty,\lambda^-]$ with 
\begin{align*}
\lim_{\theta^- \rightarrow -\infty} f^-(\theta^-) = 0 \quad \text{and} \quad f^-(\lambda^-) = \alpha^- \Gamma(-\beta^-) < 0.
\end{align*}
Therefore, if we have (\ref{ineqn-rq-1}), then for no pair $(\theta^+,\theta^-) \in (-\infty,\lambda^+ - 1] \times (-\infty,\lambda^-)$ equation (\ref{mart-eqn-two-sided}) is satisfied. If we have (\ref{ineqn-rq-2}), then let $-\infty \leq \theta_1^+ < \theta_2^+ \leq \lambda^+ - 1$ be the unique solutions of the equations
\begin{align*}
f^+(\theta_1^+) &= r-q,
\\ f^+(\theta_2^+) &= r-q - \alpha^- \Gamma(-\beta^-),
\end{align*}
with the conventions 
\begin{align*}
\theta_1^+ = -\infty, \quad &\text{if $r-q = 0$,} 
\\ \theta_2^+ = \lambda^+ - 1, \quad &\text{if $r-q - \alpha^- \Gamma(-\beta^-) > -\alpha^+ \Gamma(-\beta^+)$,} 
\end{align*}
and define
\begin{align}\label{def-Phi}
\Phi(\theta^+) := (f^-)^{-1} ( r-q - f^+(\theta^+) ), \quad \theta^+ \in (\theta_1^+,\theta_2^+).
\end{align}
Then $\Phi$ is continuous and strictly increasing with $\Phi((\theta_1^+,\theta_2^+)) = (-\infty,\lambda^-)$, which finishes the proof.
\end{proof}

\begin{remark}\label{rem-measures}
The proof of Proposition~\ref{prop-mm-bilateral} shows that the situation $\theta_1^+ = -\infty$ occurs if and only if $r = q$ and that the situation $\theta_2^+ = \lambda^+ - 1$ occurs if and only if
\begin{align*}
r-q \leq -\alpha^+ \Gamma(-\beta^+) + \alpha^- \Gamma(-\beta^-).
\end{align*}
All equivalent measure transformations preserving the class of tempered stable processes are bilateral Esscher transforms; this follows from \cite[Proposition~8.1]{Kuechler-Tappe-TS}, see also \cite[Example~9.1]{Cont-Tankov}. Hence, we introduce the set of parameters
\begin{align*}
\mathcal{M}_{\mathbb{P}} := \{ (\theta^+, \theta^-) \in
(-\infty,\lambda^+) \times (-\infty,\lambda^-) \,|\,
\text{$\mathbb{P}^{(\theta^+,\theta^-)}$ is a martingale measure} \}
\end{align*}
such that the bilateral Esscher transform is a martingale measure. The
previous Proposition~\ref{prop-mm-bilateral} tells us that for (\ref{ineqn-rq-1}) we have $\mathcal{M}_{\mathbb{P}} = \emptyset$, and that for (\ref{ineqn-rq-2}) we have
\begin{align}\label{set-MM}
\mathcal{M}_{\mathbb{P}} = \{ (\theta,\Phi(\theta)) \in \mathbb{R}^2
\,|\, \theta \in (\theta_1^+,\theta_2^+) \}.
\end{align}
Moreover, we remark that condition (\ref{ineqn-rq-2}) is always fulfilled for $r=q$.
\end{remark}

\begin{lemma}\label{lemma-mm-TS-entropy}
For all $(\theta^+,\theta^-) \in (-\infty,\lambda^+) \times (-\infty,\lambda^-)$ we have
\begin{align*}
&\mathbb{H}(\mathbb{P}^{(\theta^+,\theta^-)} \,|\, \mathbb{P})
\\ &= -\alpha^+ \Gamma(-\beta^+) \Big( \lambda^+ \beta^+ (\lambda^+ - \theta^+)^{\beta^+ - 1} + (1-\beta^+) (\lambda^+ - \theta^+)^{\beta^+} - (\lambda^+)^{\beta^+} \Big)
\\ &\quad -\alpha^- \Gamma(-\beta^-) \Big( \lambda^- \beta^- (\lambda^- - \theta^-)^{\beta^- - 1} + (1-\beta^-) (\lambda^- - \theta^-)^{\beta^-} - (\lambda^-)^{\beta^-} \Big).
\end{align*}
\end{lemma}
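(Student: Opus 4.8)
The plan is to exploit the identity $\mathbb{H}(\mathbb{Q}\,|\,\mathbb{P}) = \mathbb{E}_{\mathbb{Q}}[\ln \Lambda_1(\mathbb{Q},\mathbb{P})]$ recalled above, with $\mathbb{Q} = \mathbb{P}^{(\theta^+,\theta^-)}$. By Definition~\ref{def-bilateral-Esscher} one has $\ln \Lambda_1(\mathbb{P}^{(\theta^+,\theta^-)},\mathbb{P}) = \theta^+ X_1^+ - \Psi^+(\theta^+) + \theta^- X_1^- - \Psi^-(\theta^-)$, hence
\begin{align*}
\mathbb{H}(\mathbb{P}^{(\theta^+,\theta^-)}\,|\,\mathbb{P}) = \theta^+ \mathbb{E}_{\mathbb{Q}}[X_1^+] + \theta^- \mathbb{E}_{\mathbb{Q}}[X_1^-] - \Psi^+(\theta^+) - \Psi^-(\theta^-).
\end{align*}
This reduces the task to computing the two first moments of the subordinators $X^+, X^-$ under $\mathbb{Q}$ and then performing an algebraic simplification.

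First I would identify the laws of $X^+$ and $X^-$ under $\mathbb{Q}$. Since the likelihood process factorizes into a functional of $X^+$ times a functional of $X^-$, and $X^+, X^-$ are independent under $\mathbb{P}$, they stay independent under $\mathbb{Q}$, with $X^+$ undergoing an ordinary one-sided Esscher transform of parameter $\theta^+$ and $X^-$ one of parameter $\theta^-$. Exactly as in the proof of Lemma~\ref{lemma-two-sided-preserving} (via Proposition~2.1.3 and Example~2.1.4 in \cite{Kuechler-Soerensen}), $X^+$ is then a tempered stable subordinator whose cumulant generating function is obtained from (\ref{cumulant-plus}) by replacing $\lambda^+$ with $\lambda^+ - \theta^+$, and analogously for $X^-$ from (\ref{cumulant-minus}). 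Differentiating these at the origin --- legitimate because the cumulant generating function of a tempered stable subordinator is finite, indeed smooth, on a neighbourhood of $0$ --- yields
\begin{align*}
\mathbb{E}_{\mathbb{Q}}[X_1^+] = -\alpha^+ \beta^+ \Gamma(-\beta^+)(\lambda^+-\theta^+)^{\beta^+-1}, \qquad \mathbb{E}_{\mathbb{Q}}[X_1^-] = -\alpha^- \beta^- \Gamma(-\beta^-)(\lambda^--\theta^-)^{\beta^--1};
\end{align*}
equivalently, one may integrate $x$ against the respective tempered stable L\'{e}vy density and use $\Gamma(1-\beta)=-\beta\Gamma(-\beta)$.

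Substituting these moments together with (\ref{cumulant-plus}) and (\ref{cumulant-minus}) into the displayed formula for $\mathbb{H}$, the part involving $\alpha^+,\beta^+,\lambda^+,\theta^+$ becomes $-\alpha^+ \Gamma(-\beta^+)\big(\theta^+\beta^+(\lambda^+-\theta^+)^{\beta^+-1} + (\lambda^+-\theta^+)^{\beta^+} - (\lambda^+)^{\beta^+}\big)$, and similarly for the $-$ part. The proof then concludes with the elementary identity
\begin{align*}
\theta\beta(\lambda-\theta)^{\beta-1} + (\lambda-\theta)^{\beta} = \lambda\beta(\lambda-\theta)^{\beta-1} + (1-\beta)(\lambda-\theta)^{\beta}, \qquad \lambda > \theta,
\end{align*}
which follows at once from $\beta(\lambda-\theta)(\lambda-\theta)^{\beta-1} = \beta(\lambda-\theta)^{\beta}$; applying it with $(\lambda,\theta,\beta) = (\lambda^\pm,\theta^\pm,\beta^\pm)$ brings both parts into the asserted shape. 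The only step needing any care is the interchange of differentiation and expectation used to read off $\mathbb{E}_{\mathbb{Q}}[X_1^\pm]$, and as noted this is routine, so there is no genuine obstacle.
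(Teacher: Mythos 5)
Your proposal is correct and follows essentially the same route as the paper: the identity $\mathbb{H}(\mathbb{Q}\,|\,\mathbb{P})=\mathbb{E}_{\mathbb{Q}}[\ln\Lambda_1]$, the split into the two independent subordinator contributions, insertion of the mean $\mathbb{E}_{\mathbb{Q}}[X_1^\pm]$ of the Esscher-transformed subordinators (the paper cites \cite[Remark~2.7]{Kuechler-Tappe-TS} where you differentiate the cumulant generating function at $0$, which via $\Gamma(1-\beta)=-\beta\Gamma(-\beta)$ gives the identical expression), and the same elementary algebraic rearrangement at the end.
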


\begin{proof}
The relative entropy of the bilateral Esscher transform is given by
\begin{align*}
\mathbb{H}(\mathbb{P}^{(\theta^+,\theta^-)} \,|\, \mathbb{P}) &= \mathbb{E}_{\mathbb{P}^{(\theta^+,\theta^-)}}[\ln \Lambda_1(\mathbb{P}^{(\theta^+,\theta^-)},\mathbb{P})] 
\\ &= \mathbb{E}_{\mathbb{P}^{(\theta^+,\theta^-)}}[\theta^+ X_1^+ - \Psi^+(\theta^+)] + \mathbb{E}_{\mathbb{P}^{(\theta^+,\theta^-)}}[\theta^- X_1^- - \Psi^-(\theta^-)].
\end{align*}
Using Lemma~\ref{lemma-two-sided-preserving} and \cite[Remark~2.7]{Kuechler-Tappe-TS} we obtain
\begin{align*}
&\mathbb{E}_{\mathbb{P}^{(\theta^+,\theta^-)}}[\theta^+ X_1^+ - \Psi^+(\theta^+)]
\\ &= \theta^+ \Gamma(1-\beta^+) \frac{\alpha^+}{(\lambda^+ - \theta^+)^{1-\beta^+}} - \alpha^+ \Gamma(-\beta^+) \left[ (\lambda^+ - \theta^+)^{\beta^+} - (\lambda^+)^{\beta^+} \right]
\\ &= -\alpha^+ \Gamma(-\beta^+) \bigg( \frac{\theta^+ \beta^+}{(\lambda^+ - \theta^+)^{1-\beta^+}} + (\lambda^+ - \theta^+)^{\beta^+} - (\lambda^+)^{\beta^+} \bigg)
\\ &= -\alpha^+ \Gamma(-\beta^+) \Big( (\beta^+ \theta^+ + \lambda^+ - \theta^+) (\lambda^+ - \theta^+)^{\beta^+ - 1} - (\lambda^+)^{\beta^+} \Big)
\\ &= -\alpha^+ \Gamma(-\beta^+) \Big( \big( \lambda^+ \beta^+ + (1-\beta^+) (\lambda^+ - \theta^+) \big) (\lambda^+ - \theta^+)^{\beta^+ - 1} - (\lambda^+)^{\beta^+} \Big)
\\ &= -\alpha^+ \Gamma(-\beta^+) \Big( \lambda^+ \beta^+ (\lambda^+ - \theta^+)^{\beta^+ - 1} + (1-\beta^+) (\lambda^+ - \theta^+)^{\beta^+} - (\lambda^+)^{\beta^+} \Big).
\end{align*}
An analogous calculation for $\mathbb{E}_{\mathbb{P}^{(\theta^+,\theta^-)}}[\theta^- X_1^- - \Psi^-(\theta^-)]$ finishes the proof.
\end{proof}

\begin{theorem}\label{thm-min-entropy-bilateral}
The following statements are true:
\begin{enumerate}
\item If (\ref{ineqn-rq-1}) is satisfied, then we have $\mathcal{M}_{\mathbb{P}} = \emptyset$.

\item If (\ref{ineqn-rq-2}) is satisfied, then there exist
$\theta^+ \in (-\infty,\lambda^+)$ and $\theta^- \in
(-\infty,\lambda^-)$ such that
\begin{align}\label{min-entropy-bilateral}
\mathbb{H}(\mathbb{P}^{(\theta^+,\theta^-)} \,|\, \mathbb{P}) =
\min_{(\vartheta^+, \vartheta^-) \in \mathcal{M}_{\mathbb{P}}}
\mathbb{H}(\mathbb{P}^{(\vartheta^+,\vartheta^-)} \,|\, \mathbb{P}).
\end{align}
\end{enumerate}
\end{theorem}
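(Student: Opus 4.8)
The plan is as follows. Statement (1) is just a restatement of Proposition~\ref{prop-mm-bilateral}(1) (see also Remark~\ref{rem-measures}), so nothing is to be shown there. For statement (2) assume that (\ref{ineqn-rq-2}) holds. By Remark~\ref{rem-measures} the set $\mathcal{M}_{\mathbb{P}}$ is the nonempty graph $\{(\theta,\Phi(\theta)) : \theta \in (\theta_1^+,\theta_2^+)\}$, where $\Phi : (\theta_1^+,\theta_2^+) \to (-\infty,\lambda^-)$ is the continuous, strictly increasing bijection from Proposition~\ref{prop-mm-bilateral}. It therefore suffices to show that the continuous single-variable function
\[
h(\theta) := \mathbb{H}(\mathbb{P}^{(\theta,\Phi(\theta))} \,|\, \mathbb{P}), \qquad \theta \in (\theta_1^+,\theta_2^+),
\]
attains a minimum, since a minimizer $\theta^+$ together with $\theta^- := \Phi(\theta^+) \in (-\infty,\lambda^-)$ then realizes (\ref{min-entropy-bilateral}). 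Using the explicit formula from Lemma~\ref{lemma-mm-TS-entropy} I would write $\mathbb{H}(\mathbb{P}^{(\theta^+,\theta^-)} \,|\, \mathbb{P}) = G^+(\theta^+) + G^-(\theta^-)$, where $G^+$ (defined on $(-\infty,\lambda^+)$) and $G^-$ (defined on $(-\infty,\lambda^-)$) are the two bracketed summands, so that $h = G^+ + G^- \circ \Phi$ is continuous on $(\theta_1^+,\theta_2^+)$.

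The main step is the boundary behaviour of $G^+$ and $G^-$ on their natural domains. Since $\beta^\pm \in (0,1)$ we have $\Gamma(-\beta^\pm) < 0$, hence the coefficients $-\alpha^\pm \Gamma(-\beta^\pm)\lambda^\pm\beta^\pm$ and $-\alpha^\pm\Gamma(-\beta^\pm)(1-\beta^\pm)$ are strictly positive. Reading off the formula: as the argument tends to $-\infty$ the term involving $(\lambda^\pm - \cdot)^{\beta^\pm}$ dominates and forces $G^\pm \to +\infty$, while as the argument tends to $\lambda^\pm$ the term involving $(\lambda^\pm - \cdot)^{\beta^\pm - 1}$ dominates and again forces $G^\pm \to +\infty$. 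In particular each of $G^+,G^-$ is continuous and bounded from below on its whole domain. From the proof of Proposition~\ref{prop-mm-bilateral} we have $\Phi(\theta)\to-\infty$ as $\theta\to\theta_1^+$ and $\Phi(\theta)\to\lambda^-$ as $\theta\to\theta_2^+$; combining the blow-up of $G^-$ at both ends of $(-\infty,\lambda^-)$ with the lower bound on $G^+$ then yields $h(\theta)\to+\infty$ as $\theta\to\theta_1^+$ and as $\theta\to\theta_2^+$.

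A continuous function on a nonempty open interval that diverges to $+\infty$ at both endpoints attains its infimum at an interior point, which finishes the argument. I do not expect a genuine obstacle here; the only point requiring a little care is that the endpoints of $(\theta_1^+,\theta_2^+)$ may be finite or $-\infty$ (and $\theta_2^+$ may equal $\lambda^+-1$ or be strictly smaller), but in every case the single factor $G^-(\Phi(\theta))$ already diverges while $G^+$ stays bounded below, so the case distinction is cosmetic. One could alternatively differentiate $h$ and study $h'=0$, but the coercivity argument gives existence directly, with no messy computation and without having to address uniqueness.
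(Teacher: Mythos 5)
Your proposal is correct and follows essentially the same route as the paper: both parametrize $\mathcal{M}_{\mathbb{P}}$ by the graph of $\Phi$, use Lemma~\ref{lemma-mm-TS-entropy} to write the entropy as a continuous function of $\theta$ alone, and conclude by showing this function diverges to $+\infty$ at both ends of $(\theta_1^+,\theta_2^+)$ via the boundary behaviour $\Phi(\theta)\to-\infty$ and $\Phi(\theta)\to\lambda^-$. Your decomposition into $G^+$ and $G^-\circ\Phi$ and the remark that $G^+$ is bounded below merely make explicit the limit computation that the paper leaves implicit.
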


\begin{proof}
If (\ref{ineqn-rq-1}) is satisfied, then by Proposition~\ref{prop-mm-bilateral} we have $\mathcal{M}_{\mathbb{P}} = \emptyset$. Now, suppose that (\ref{ineqn-rq-2}) is satisfied, and let $\Phi : (\theta_1^+,\theta_2^+) \rightarrow (-\infty,\lambda^-)$ be the function from Proposition~\ref{prop-mm-bilateral}. Let $f : (\theta_1^+,\theta_2^+) \rightarrow \mathbb{R}$ be the function
\begin{align*}
&f(\theta) := -\alpha^+ \Gamma(-\beta^+) \Big( \lambda^+ \beta^+ (\lambda^+ - \theta)^{\beta^+ - 1} + (1-\beta^+) (\lambda^+ - \theta)^{\beta^+} - (\lambda^+)^{\beta^+} \Big)
\\ &-\alpha^- \Gamma(-\beta^-) \Big( \lambda^- \beta^- (\lambda^- - \Phi(\theta))^{\beta^- - 1} + (1-\beta^-) (\lambda^- - \Phi(\theta))^{\beta^-} - (\lambda^-)^{\beta^-} \Big).
\end{align*}
By Proposition~\ref{prop-mm-bilateral} and Lemma~\ref{lemma-mm-TS-entropy}, for each $\theta \in (\theta_1^+,\theta_2^+)$ the measure $\mathbb{P}^{(\theta,\Phi(\theta))}$ is a martingale measure and we have $\mathbb{H}(\mathbb{P}^{(\theta,\Phi(\theta))} \,|\, \mathbb{P}) = f(\theta)$.
The function $\Phi$ is strictly increasing with
\begin{align*}
\lim_{\theta \downarrow \theta_1^+} \Phi(\theta) = -\infty \quad \text{and} \quad \lim_{\theta \uparrow \theta_2^+} \Phi(\theta) = \lambda^-,
\end{align*}
which gives us
\begin{align*}
\lim_{\theta \downarrow \theta_1^+} f(\theta) = \infty \quad \text{and} \quad \lim_{\theta \uparrow \theta_2^+} f(\theta) = \infty.
\end{align*}
Since $f$ is continuous, it attains a minimum and the assertion follows.
\end{proof}

\begin{remark}
In contrast to bilateral Gamma stock models, it can happen that $\mathcal{M}_{\mathbb{P}} = \emptyset$, i.e., there is no equivalent martingale measure under which $X$ remains a tempered stable process. Moreover, in contrast to bilateral Gamma stock models, the function $\Phi$ from Proposition~\ref{prop-mm-bilateral}, which is defined in (\ref{def-Phi}) by means of the inverse of $f^-$, does not seem to be available in closed form, cf. \cite[Remark~6.7]{Kuechler-Tappe-pricing}.
\end{remark}

Next, we consider the $p$-distances
\begin{align*}
\mathbb{H}_p(\mathbb{Q} \,|\, \mathbb{P}) := \mathbb{E}_{\mathbb{P}} \Bigg[ \bigg( \frac{d \mathbb{Q}}{d \mathbb{P}} \bigg)^p \Bigg] \quad \text{for $p > 1$.}
\end{align*}
As mentioned at the beginning of this section, for tempered stable stock models the $p$-optimal martingale
measure does not exist. However, we can, as provided for the minimal entropy martingale measure, determine
the $p$-optimal martingale measure within the class of tempered stable processes. For this purpose, we compute the $p$-distance of a bilateral Esscher transform. Since the subordinators $X^+$ and $X^-$ are independent, for $p > 1$ and $\theta^+ \in (-\infty,\frac{\lambda^+}{p})$, $\theta^- \in (-\infty,\frac{\lambda^-}{p})$ the $p$-distance is given by
\begin{equation}\label{p-distance}
\begin{aligned}
&\mathbb{H}_p(\mathbb{P}^{(\theta^+,\theta^-)} \,|\, \mathbb{P}) = \mathbb{E}_{\mathbb{P}} \left[ \bigg( \frac{d \mathbb{P}^{(\theta^+,\theta^-)}}{d \mathbb{P}} \bigg)^p \right] 
\\ &= e^{-p(\Psi^+(\theta^+) + \Psi^-(\theta^-))} \mathbb{E}_{\mathbb{P}} \big[ e^{p \theta^+ X_1^+} \big] \mathbb{E}_{\mathbb{P}} \big[ e^{p \theta^- X_1^-} \big]
\\ &= \exp \big( -p (\Psi^+(\theta^+) + \Psi^-(\theta^-)) + \Psi^+(p\theta^+) + \Psi^-(p \theta^-) \big)
\\ &= \exp \Big( - \alpha^+ \Gamma(-\beta^+) \Big[ p \big[ (\lambda^+ - \theta^+)^{\beta^+} - (\lambda^+)^{\beta^+} \big] - \big[ (\lambda^+ - p\theta^+)^{\beta^+} - (\lambda^+)^{\beta^+} \big] \Big]
\\ &\quad\quad\quad\quad - \alpha^- \Gamma(-\beta^-) \Big[ p \big[ (\lambda^- - \theta^-)^{\beta^-} - (\lambda^-)^{\beta^-} \big] - \big[ (\lambda^- - p \theta^-)^{\beta^-} - (\lambda^-)^{\beta^-} \big] \Big] \Big).
\end{aligned}
\end{equation}
A similar argumentation as in Theorem~\ref{thm-min-entropy-bilateral} shows that, provided condition (\ref{ineqn-rq-2}) holds true, there exists a pair $(\theta^+,\theta^-)$ minimizing the $p$-distance (\ref{p-distance}), and in this case we also have $\theta^- = \Phi(\theta^+)$, where $\theta^+$ minimizes the function
\begin{align}\label{def-f-p}
\theta \mapsto \mathbb{H}_p(\mathbb{P}^{(\theta,\Phi(\theta))} \,|\, \mathbb{P}).
\end{align}
Numerical computations for concrete examples suggest that $\theta_p \rightarrow \theta_1$ for $p \downarrow 1$, where for each $p > 1$ the parameter $\theta_p$ minimizes (\ref{def-f-p}), and $\theta_1$ minimizes
\begin{align}
\theta \mapsto \mathbb{H}(\mathbb{P}^{(\theta,\Phi(\theta))} \,|\, \mathbb{P}).
\end{align}
This is not surprising, since it is known that, under suitable technical conditions, the $p$-optimal martingale measure converges to the minimal entropy martingale measure for $p \downarrow 1$, see, e.g. \cite{Grandits, Grandits-R, Santacroce-2005, Jeanblanc, Bender, Kohlmann-2008}.

\section{Existence of F\"{o}llmer Schweizer minimal martingale measures}\label{sec-MMM}

In this section, we deal with the existence of the F\"{o}llmer Schweizer minimal martingale measure in tempered stable stock models. This measure has been introduced in \cite{FS} with the motivation of constructing optimal hedging strategies. 
Throughout this section, we fix a finite time horizon $T > 0$ and assume that $\lambda^+ \geq 2$. Then the constant
\begin{align}\label{def-c}
c = c(\alpha^+,\alpha^-,\beta^+,\beta^-,\lambda^+,\lambda^-,r,q) = \frac{\Psi(1) - (r-q)}{\Psi(2) - 2\Psi(1)},
\end{align}
is well-defined. For technical reasons, we shall also assume that the filtration $(\mathcal{F}_t)_{t \geq 0}$ is generated by the tempered stable process of the form (\ref{X-TS}).
As in \cite[Lemma~7.1]{Kuechler-Tappe-pricing}, we show that the discounted stock price process $\tilde{S}$ is a special semimartingale. Let $\tilde{S} = S_0 + M + A$ be its canonical decomposition and let $\hat{Z}$ be the stochastic exponential
\begin{align}\label{density-proc}
\hat{Z}_t = \mathcal{E} \bigg( - \int_0^{\bullet} \frac{c}{\tilde{S}_{s-}} dM_s \bigg)_t, \quad t \in [0,T],
\end{align}
where we recall that for a semimartingale $X$ the stochastic exponential $Y = \mathcal{E}(X)$ defined as
\begin{align*}
\mathcal{E}(X)_t := \exp \bigg( X_t - X_0 - \frac{1}{2} \langle X^c,X^c \rangle \bigg) \prod_{s \leq t} (1 + \Delta X_s) e^{- \Delta X_s}, \quad t \geq 0
\end{align*}
is the unique solution of the stochastic differential equation
\begin{align*}
dY_t = Y_{t-} dX_t, \quad Y_0 = 1,
\end{align*}
see, e.g. \cite[Theorem~I.4.61]{Jacod-Shiryaev}. The (possibly signed) measure $\hat{\mathbb{P}}$ with density
\begin{align}\label{MMM-trafo}
\frac{d \hat{\mathbb{P}}}{d \mathbb{P}} := \hat{Z}_T
\end{align}
is the so-called \textit{F\"{o}llmer Schweizer minimal martingale measure} (in short, \textit{FS minimal martingale measure}).

\begin{theorem}\label{thm-mmm}
The following statements are equivalent:
\begin{enumerate}
\item $\hat{Z}$ is a strict martingale density for $\tilde{S}$.

\item $\hat{Z}$ is a strictly positive $\mathbb{P}$-martingale.

\item We have
\begin{align}\label{cond-Psi-1-2}
-1 \leq c \leq 0.
\end{align}
\item We have
\begin{align}\label{mmm-cond-1}
&\alpha^+ \Gamma(-\beta^+) [ (\lambda^+ - 1)^{\beta^+} - (\lambda^+)^{\beta^+} ] 
\\ \notag &\quad + \alpha^- \Gamma(-\beta) [ (\lambda^- + 1)^{\beta^-} - (\lambda^-)^{\beta^-} ] \leq r-q
\\ \label{mmm-cond-2} \text{and} \quad &\alpha^+ \Gamma(-\beta^+) [ (\lambda^+ - 1)^{\beta^+} - (\lambda^+ - 2)^{\beta^+} ] 
\\ \notag &\quad + \alpha^- \Gamma(-\beta) [ (\lambda^- + 1)^{\beta^-} - (\lambda^- + 2)^{\beta^-} ] \leq -(r-q).
\end{align}
\end{enumerate}
If the previous conditions are satisfied, then under the FS minimal martingale measure $\hat{\mathbb{P}}$ we have
\begin{equation}\label{X-under-MMM}
\begin{aligned}
X &\sim {\rm TS}((c+1)\alpha^+,\beta^+,\lambda^+;(c+1)\alpha^-,\beta^-,\lambda^-)
\\ &\quad * {\rm TS}(-c \alpha^+,\beta^+,\lambda^+ - 1;-c \alpha^-,\beta^-,\lambda^- + 1).
\end{aligned}
\end{equation}
\end{theorem}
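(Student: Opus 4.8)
The plan is to exploit the exponential--L\'evy structure of the model, adapting \cite[Section~7]{Kuechler-Tappe-pricing} to the tempered stable case. First I would make the canonical decomposition of $\tilde{S}$ explicit. Since $X$ is purely discontinuous, $\tilde{S} = S_0 \mathcal{E}(L)$, where $L$ is the L\'evy process with jumps $\Delta L_s = e^{\Delta X_s} - 1$ and with drift fixed by $\mathbb{E}_{\mathbb{P}}[\tilde{S}_t] = S_0 e^{(\Psi(1) - (r-q))t}$; the standing assumption $\lambda^+ \geq 2$ guarantees that $L$ is square integrable. Writing $\tilde{S} = S_0 + M + A$ with $M_t = \int_0^t \tilde{S}_{s-}\,dL_s^{\mathrm{mart}}$ and $dA_s = (\Psi(1)-(r-q))\tilde{S}_{s-}\,ds$, a short second-moment computation based on $\mathbb{E}_{\mathbb{P}}[\tilde{S}_t^2] = S_0^2 e^{(\Psi(2)-2(r-q))t}$ yields $d\langle M \rangle_s = (\Psi(2)-2\Psi(1))\tilde{S}_{s-}^2\,ds$; note $\Psi(2) - 2\Psi(1) > 0$ by strict convexity of $\Psi$, so that the constant $c$ from \eqref{def-c} is exactly $(dA_s/d\langle M\rangle_s)\,\tilde{S}_{s-}$. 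Hence $-\int_0^\bullet (c/\tilde{S}_{s-})\,dM_s = -c L^{\mathrm{mart}}$ and $\hat{Z} = \mathcal{E}(-c L^{\mathrm{mart}})$; moreover, the Ansel--Stricker computation behind the minimal martingale measure shows that $\hat{Z}\tilde{S}$ is a local martingale irrespective of the sign of $c$.

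The chain (1)$\Leftrightarrow$(2)$\Leftrightarrow$(3) then reduces to positivity and integrability of $\mathcal{E}(-cL^{\mathrm{mart}})$. Its jumps are $1 - c(e^{\Delta X_s}-1)$, and because the jumps $\Delta X_s$ exhaust $\mathbb{R}\setminus\{0\}$ (the L\'evy measure \eqref{Levy-measure-TS} has full support), $\hat{Z}$ is $\mathbb{P}$-a.s.\ strictly positive if and only if $1 - c(e^x - 1) > 0$ for all $x \in \mathbb{R}$, which is elementarily equivalent to $-1 \leq c \leq 0$; if $c \notin [-1,0]$, then with positive probability a single jump of $X$ sends $\hat{Z}$ to a nonpositive value, so $\hat{\mathbb{P}}$ is only a signed measure. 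Since $\hat{Z}$ is always a local martingale and $\hat{Z}\tilde{S}$ always a local martingale, this already gives (1)$\Leftrightarrow$(3). For (2) there remains, when $-1 \leq c \leq 0$, the task of upgrading the strictly positive local martingale $\hat{Z}$ to a true $\mathbb{P}$-martingale on $[0,T]$. Here I would use that $\int_{\mathbb{R}}|e^x - 1|\,F(dx) < \infty$ (again by the standing assumption), so that $\hat{Z}_t$ equals the Esscher-type density $\prod_{s\leq t}\bigl(1 - c(e^{\Delta X_s}-1)\bigr)\exp\bigl(c t \int_{\mathbb{R}}(e^x-1)\,F(dx)\bigr)$, whose expectation is $1$ by the exponential formula for the jump measure of $X$; hence $\mathbb{E}_{\mathbb{P}}[\hat{Z}_T]=1$ and the supermartingale $\hat{Z}$ is a genuine martingale.

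The equivalence (3)$\Leftrightarrow$(4) is a direct computation with \eqref{cumulant}: dividing by $\Psi(2)-2\Psi(1)>0$, the inequality $c \leq 0$ becomes $\Psi(1) \leq r-q$, which is \eqref{mmm-cond-1}, and $c \geq -1$ becomes $\Psi(1)-\Psi(2) \leq -(r-q)$, which is \eqref{mmm-cond-2}.

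Finally, to identify the $\hat{\mathbb{P}}$-law of $X$ when \eqref{cond-Psi-1-2} holds, I would invoke Girsanov's theorem for L\'evy processes (see, e.g., \cite[Chapter~9]{Cont-Tankov} or \cite[III.3]{Jacod-Shiryaev}): since $\hat{Z}$ acts on the jump measure of $X$ by multiplication with $Y(x) = 1 - c(e^x - 1)$ and leaves the absent Gaussian part untouched, $X$ remains a L\'evy process under $\hat{\mathbb{P}}$ with L\'evy measure $\hat{F}(dx) = Y(x)F(dx)$, and since $\int_{|x|\leq 1}|x|\,F(dx) < \infty$ its $\hat{\mathbb{P}}$-characteristic function is again of the pure-jump form $z \mapsto \exp\bigl(t\int_{\mathbb{R}}(e^{izx}-1)\hat{F}(dx)\bigr)$. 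Writing $Y(x) = (1+c) - c e^x$ gives $\hat{F}(dx) = (1+c)F(dx) + (-c)e^x F(dx)$; the first summand is the L\'evy measure of ${\rm TS}((c+1)\alpha^+,\beta^+,\lambda^+;(c+1)\alpha^-,\beta^-,\lambda^-)$, while $e^x F(dx)$ is obtained from $F(dx)$ by replacing $\lambda^+$ with $\lambda^+ - 1$ and $\lambda^-$ with $\lambda^- + 1$, so the second summand is the L\'evy measure of ${\rm TS}(-c\alpha^+,\beta^+,\lambda^+-1;-c\alpha^-,\beta^-,\lambda^-+1)$ (both admissible since $0 \leq -c \leq 1$, $0 \leq c+1 \leq 1$ and $\lambda^+ - 1 \geq 1$, with the degenerate factor read as $\delta_0$ when $c \in \{0,-1\}$). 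As the characteristic function factorises accordingly, $X$ has under $\hat{\mathbb{P}}$ the distribution \eqref{X-under-MMM}.

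I expect the genuine obstacle to be the true-martingale claim in (2): the jumps of $-cL^{\mathrm{mart}}$ are neither bounded away from $-1$ (when $c = -1$) nor bounded above, so the usual sufficient conditions for a stochastic exponential to be a true martingale do not apply and one really needs the Esscher-representation argument above --- which is also the point where the hypothesis $\lambda^+ \geq 2$ enters essentially. The remaining ingredients are either taken over almost verbatim from \cite{Kuechler-Tappe-pricing} or elementary.
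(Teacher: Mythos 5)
Your proposal is correct and follows essentially the same route as the paper, whose own proof consists only of the elementary equivalence (3)$\Leftrightarrow$(3') via $\Psi(1)\leq r-q$ and $\Psi(1)-\Psi(2)\leq -(r-q)$, delegating everything else to the proof of Theorem~7.3 in the bilateral Gamma paper \cite{Kuechler-Tappe-pricing}. What you have done is to fill in those delegated steps explicitly (canonical decomposition, identification of $c$ as the ratio $\tilde{S}_{s-}\,dA_s/d\langle M\rangle_s$, positivity of the jumps $1-c(e^x-1)$, the exponential-formula argument for $\mathbb{E}_{\mathbb{P}}[\hat{Z}_T]=1$, and the Girsanov computation $\hat{F}(dx)=(1+c)F(dx)-c\,e^xF(dx)$), all of which check out.
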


\begin{proof}
We only have to show the equivalence (3) $\Leftrightarrow$ (4), as the rest follows by arguing as in the proof of \cite[Theorem~7.3]{Kuechler-Tappe-pricing}. We observe that (\ref{cond-Psi-1-2}) is equivalent to the two conditions
\begin{align*}
\Psi(1) \leq r-q \quad \text{and} \quad \Psi(1) - \Psi(2) \leq - (r-q),
\end{align*}
and, in view of the cumulant generating function given by (\ref{cumulant}), these two conditions are fulfilled if and only if we have (\ref{mmm-cond-1}) and (\ref{mmm-cond-2}).
\end{proof}

\begin{remark}
Relation (\ref{X-under-MMM}) means that under $\hat{\mathbb{P}}$ the driving process $X$ is the sum of two independent tempered stable processes. There are the following two boundary values:
\begin{itemize}
\item In the case $c = 0$ we have
\begin{align*}
X \sim {\rm TS}(\alpha^+,\beta^+,\lambda^+,\alpha^-,\beta^-,\lambda^-) \quad \text{under $\hat{\mathbb{P}}$,}
\end{align*}
i.e., the FS minimal martingale measure $\hat{\mathbb{P}}$ coincides with the physical measure $\mathbb{P}$. Indeed, the
definition (\ref{def-c}) of $c$ and Lemma~\ref{lemma-TS-mm} show that $\mathbb{P}$ already is a martingale measure for $\tilde{S}$.

\item In the case $c = -1$ we have
\begin{align*}
X \sim {\rm TS}(\alpha^+,\beta^+,\lambda^+-1,\alpha^-,\beta^-,\lambda^- + 1) \quad \text{under $\hat{\mathbb{P}}$,}
\end{align*}
i.e., the FS minimal martingale measure $\hat{\mathbb{P}}$ coincides with the Esscher transform $\mathbb{P}^1$, see Theorem~\ref{thm-Esscher}. Indeed, the definition (\ref{def-c}) of $c$ shows that equation (\ref{Esscher-equation}) is satisfied with $\Theta = 1$.
\end{itemize}
\end{remark}

As outlined at the end of \cite[Section~7]{Kuechler-Tappe-pricing}, under the FS minimal martingale measure $\hat{\mathbb{P}}$ we can construct a trading strategy $\xi$ which minimizes the quadratic hedging error. The arguments transfer to our present situation with a driving tempered stable process.

\section{Option pricing in tempered stable stock
models}\label{sec-option}

In this section, we present pricing formulae for European call options. After performing a measure change $\mathbb{Q} \sim \mathbb{P}$ as in Section~\ref{sec-Esscher} or \ref{sec-bilateral-Esscher}, that is, $\mathbb{Q} = \mathbb{P}^{\Theta}$ or $\mathbb{Q} = \mathbb{P}^{(\theta,\Phi(\theta))}$ for appropriate parameters, we may assume that the driving process $X$ is a tempered stable process of the form (\ref{X-TS}) under the martingale measure $\mathbb{Q}$.
We fix a strike price $K > 0$ and a maturity date $T > 0$. Then the price of a European call option with these parameters is given by
\begin{align*}
\pi = e^{-rT} \mathbb{E}_{\mathbb{Q}}[(S_T - K)^+].
\end{align*}
First, we shall derive an option pricing formula in closed form by following an idea from \cite[Section~8.1]{Teichmann}. In the sequel,
\begin{align*}
F_{\alpha^+,\beta^+,\lambda^+;\alpha^-,\beta^-,\lambda^-}
\end{align*}
denotes the ${\rm TS}(\alpha^+,\beta^+,\lambda^+;\alpha^-,\beta^-,\lambda^-)$-distribution function and
\begin{align*}
\bar{F}_{\alpha^+,\beta^+,\lambda^+;\alpha^-,\beta^-,\lambda^-} := 1 - F_{\alpha^+,\beta^+,\lambda^+;\alpha^-,\beta^-,\lambda^-}.
\end{align*}

\begin{proposition}
Suppose that $\lambda^+ > 1$. Then, the price of the call option is given by
\begin{equation}\label{option-price}
\begin{aligned}
\pi &= S_0 e^{(\Psi(1)-r) T} \bar{F}_{\alpha^+ T,\beta^+,\lambda^+ - 1;\alpha^- T,\beta^-,\lambda^- + 1}(\ln (K / S_0)) 
\\ &\quad - e^{-rT} K \bar{F}_{\alpha^+ T,\beta^+,\lambda^+;\alpha^-,\beta^- T,\lambda^-}(\ln (K / S_0)).
\end{aligned}
\end{equation}
\end{proposition}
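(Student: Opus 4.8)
The plan is to evaluate $\pi = e^{-rT}\mathbb{E}_{\mathbb{Q}}[(S_T-K)^+]$ by decomposing the payoff and recognising each resulting expectation as a tail of a tempered stable distribution. First I would write, using $S_T = S_0 e^{X_T}$ and $\{S_T>K\}=\{X_T>\ln(K/S_0)\}$,
\begin{align*}
\pi = e^{-rT}\Big( S_0\,\mathbb{E}_{\mathbb{Q}}\big[e^{X_T}\mathbbm{1}_{\{X_T>\ln(K/S_0)\}}\big] - K\,\mathbb{Q}\big(X_T>\ln(K/S_0)\big)\Big).
\end{align*}
The second term is immediate: by (\ref{distribution-TS-time}) the law of $X_T$ under $\mathbb{Q}$ is ${\rm TS}(\alpha^+T,\beta^+,\lambda^+;\alpha^-T,\beta^-,\lambda^-)$, so $\mathbb{Q}(X_T>\ln(K/S_0)) = \bar F_{\alpha^+T,\beta^+,\lambda^+;\alpha^-T,\beta^-,\lambda^-}(\ln(K/S_0))$, which gives the second summand in (\ref{option-price}).

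For the first term I would use the Esscher-measure-change idea of \cite[Section~8.1]{Teichmann}. Since $\lambda^+>1$, the parameter $\Theta=1$ lies in $(-\lambda^-,\lambda^+)$, so the Esscher transform $\mathbb{Q}^{1}$ of $\mathbb{Q}$ (all results of Section~\ref{sec-Esscher} apply with $\mathbb{Q}$ as base measure, since $X$ is tempered stable of the form (\ref{X-TS}) under $\mathbb{Q}$) is well defined, with likelihood process $\Lambda_t(\mathbb{Q}^{1},\mathbb{Q}) = e^{X_t-\Psi(1)t}$ as in (\ref{density-process}); in particular $\mathbb{E}_{\mathbb{Q}}[e^{X_T}] = e^{\Psi(1)T}<\infty$, because $\Psi$ is finite on $[-\lambda^-,\lambda^+]$ by (\ref{cumulant}). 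As $\{X_T>\ln(K/S_0)\}\in\mathcal{F}_T$, changing measure with this likelihood ratio yields
\begin{align*}
\mathbb{E}_{\mathbb{Q}}\big[e^{X_T}\mathbbm{1}_{\{X_T>\ln(K/S_0)\}}\big] = e^{\Psi(1)T}\,\mathbb{Q}^{1}\big(X_T>\ln(K/S_0)\big).
\end{align*}
By Lemma~\ref{lemma-TS-change-Esscher} (with $\Theta=1$) together with (\ref{distribution-TS-time}), the law of $X_T$ under $\mathbb{Q}^{1}$ is ${\rm TS}(\alpha^+T,\beta^+,\lambda^+-1;\alpha^-T,\beta^-,\lambda^-+1)$ — here $\lambda^+>1$ is exactly what makes $\lambda^+-1$ an admissible (strictly positive) tempering parameter — so $\mathbb{Q}^{1}(X_T>\ln(K/S_0)) = \bar F_{\alpha^+T,\beta^+,\lambda^+-1;\alpha^-T,\beta^-,\lambda^-+1}(\ln(K/S_0))$. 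Substituting back and writing $e^{-rT}e^{\Psi(1)T} = e^{(\Psi(1)-r)T}$ gives (\ref{option-price}).

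The argument is essentially a routine computation; the only step requiring attention is the change of measure, namely checking that $e^{X_T-\Psi(1)T}$ is a bona fide density (finiteness of $\mathbb{E}_{\mathbb{Q}}[e^{X_T}]$, which needs $\lambda^+\ge 1$) and identifying the transformed marginal of $X_T$ as a tempered stable law with strictly positive parameters, which is where the hypothesis $\lambda^+>1$ enters essentially. One may also note that, $\mathbb{Q}$ being a martingale measure, Lemma~\ref{lemma-TS-mm} gives $\Psi(1)=r-q$, so the prefactor $e^{(\Psi(1)-r)T}$ equals $e^{-qT}$; this identity is, however, not needed for the derivation itself.
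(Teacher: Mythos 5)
Your proof is correct and follows essentially the same route as the paper: split the payoff, identify the second term directly from (\ref{distribution-TS-time}), and handle the first term via the Esscher change of measure with $\Theta = 1$ together with Lemma~\ref{lemma-TS-change-Esscher}. Your added remarks on why $\lambda^+ > 1$ is needed (integrability of $e^{X_T}$ and positivity of the tempering parameter $\lambda^+ - 1$) are accurate and merely make explicit what the paper leaves implicit.
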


\begin{proof}
By the definition of the likelihood process (\ref{density-process}) we obtain
\begin{align*}
\pi &= e^{-rT} \mathbb{E}_{\mathbb{Q}}[ (S_T - K)^+ ] = e^{-rT} \mathbb{E}_{\mathbb{Q}}[(S_0 e^{X_T} - K)^+] 
\\ &= S_0 e^{-rT} \mathbb{E}_{\mathbb{Q}} [ e^{X_T} \mathbbm{1}_{\{ X_T \geq \ln (K / S_0) \}} ] - e^{-rT} K \mathbb{Q}(X_T \geq \ln (K / S_0))
\\ &= S_0 e^{-rT} \mathbb{E}_{\mathbb{Q}^1} \Bigg[ e^{X_T} \mathbbm{1}_{\{ X_T \geq \ln (K/S_0) \}} \frac{d \mathbb{Q}}{d \mathbb{Q}^1} \bigg|_{\mathcal{F}_T} \Bigg] - e^{-rT} K \mathbb{Q}(X_T \geq \ln (K/S_0))
\\ &= S_0 e^{-rT} e^{\Psi(1) T} \mathbb{Q}^1 ( X_T \geq \ln (K/S_0) ) - e^{-rT} K \mathbb{Q}(X_T \geq \ln (K/S_0)),
\end{align*}
which, in view of Lemma~\ref{lemma-TS-change-Esscher}, provides the formula (\ref{option-price}).
\end{proof}

\begin{remark}
Note that applying the option pricing formula (\ref{option-price}) requires knowledge about the densities of tempered stable distributions, which are generally not available in closed form. Therefore, we will turn to the option pricing formula (\ref{option-Carr-Madan}) below, which is based on Fourier transform techniques. However, we remark that formula (\ref{option-price}) also holds true in the bilateral Gamma case $\beta^+ = \beta^- = 0$, for which the densities are given in terms of the Whittaker function, see \cite[Section~4]{Kuechler-Tappe}.
\end{remark}

In the sequel, we will use the following option pricing formula (\ref{option-Carr-Madan}), which is based on Fourier transform techniques. Let $X$ is a tempered stable process of the form (\ref{X-TS}) under $\mathbb{P}$, let $\mathbb{Q} \sim \mathbb{P}$ be a martingale measure as in Section~\ref{sec-Esscher}, \ref{sec-bilateral-Esscher} or \ref{sec-MMM}, that is, $\mathbb{Q} = \mathbb{P}^{\Theta}$, $\mathbb{Q} = \mathbb{P}^{(\theta,\Phi(\theta))}$ or $\mathbb{Q} = \hat{\mathbb{P}}$, and denote by $\varphi_{X_T}$ the characteristic function of $X_T$ under the martingale measure $\mathbb{Q}$.

\begin{proposition}\label{prop-formula}
We suppose that
\begin{itemize}
\item $\lambda^+ > 1$, if we have (\ref{X-TS}) under $\mathbb{Q}$. In this case, let $\nu \in (1,\lambda^+)$ be arbitrary.

\item $\lambda^+ > 2$, if we have (\ref{X-under-MMM}) under $\mathbb{Q}$. In this case, let $\nu \in (1,\lambda^+ - 1)$ be arbitrary.
\end{itemize}
Then the price of the call option is given by
\begin{align}\label{option-Carr-Madan}
\pi = - \frac{e^{-rT} K}{2 \pi} \int_{i \nu - \infty}^{i \nu + \infty} \bigg( \frac{K}{S_0} \bigg)^{iz} \frac{\varphi_{X_T}(-z)}{z(z-i)} dz.
\end{align}
\end{proposition}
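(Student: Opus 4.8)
The plan is to derive the Carr--Madan formula (\ref{option-Carr-Madan}) from the general Fourier-pricing principle, verifying that the assumed integrability conditions on $\nu$ are exactly what is needed to justify the manipulations. Writing $k := \ln(K/S_0)$ and $s := S_0$, the call price is $\pi = e^{-rT}\mathbb{E}_{\mathbb{Q}}[(s e^{X_T} - K)^+]$. The classical Carr--Madan device is to multiply the (undiscounted, dampened) payoff by $e^{-\nu k}$ with a damping parameter $\nu > 1$ so that $g(k) := e^{\nu k}\,\mathbb{E}_{\mathbb{Q}}[(s e^{X_T} - s e^{k})^+]$ becomes integrable in $k$ over $\mathbb{R}$; one then computes its Fourier transform explicitly, obtaining something proportional to $\varphi_{X_T}(\,\cdot\,)/\big(z(z-i)\big)$ after a Fubini interchange, and finally inverts. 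Reassembling and shifting the contour to the horizontal line $\{\operatorname{Im} z = \nu\}$ yields (\ref{option-Carr-Madan}). Since this is by now a standard computation (see Carr--Madan, or \cite[Section~8.1]{Teichmann} and \cite{Cont-Tankov}), I would present it compactly rather than grind through the Fubini bookkeeping.

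The key steps, in order, are as follows. \emph{Step 1: finiteness of the relevant exponential moment.} Under $\mathbb{Q}$ the process $X$ is either tempered stable of the form (\ref{X-TS}) or the convolution (\ref{X-under-MMM}); in the first case $\mathbb{E}_{\mathbb{Q}}[e^{\nu X_T}] < \infty$ for $\nu < \lambda^+$, by (\ref{cumulant}), and in the second case the right tail of the Lévy measure of the second factor in (\ref{X-under-MMM}) decays like $e^{-(\lambda^+ - 1)x}$, so $\mathbb{E}_{\mathbb{Q}}[e^{\nu X_T}] < \infty$ for $\nu < \lambda^+ - 1$; these are precisely the stated ranges for $\nu$, and they also guarantee that the characteristic function $\varphi_{X_T}$ extends analytically to the strip $\{0 \le \operatorname{Im} z \le \nu\}$ so that $\varphi_{X_T}(-z)$ makes sense on the integration contour. \emph{Step 2: the dampened transform.} Show $g \in L^1(\mathbb{R})$ using Step 1, compute
\begin{align*}
\int_{\mathbb{R}} e^{iuk} g(k)\,dk = \frac{s\,\varphi_{X_T}\big(u + i(\nu - 1)\big)}{(\nu + iu)(\nu + iu - 1)}
\end{align*}
by interchanging the $k$-integral with $\mathbb{E}_{\mathbb{Q}}$ (Fubini, justified by Step 1), and then invert via the Fourier inversion theorem. \emph{Step 3: undamp and rewrite.} Multiply back by $e^{-\nu k}$, substitute $z = u + i\nu$ to turn the real integral into the contour integral over $i\nu - \infty$ to $i\nu + \infty$, use $\varphi_{X_T}(z - i) \cdot (\text{phase}) $ and the identity $e^{-\nu k} e^{iuk} = e^{izk}$, $k = \ln(K/S_0)$, together with $z(z-i) = (\nu+iu)(\nu+iu-1)$ after the shift, and fold the factors $e^{-rT}$, $K$, $(K/S_0)^{iz}$ into place; the single remaining constant $-1/(2\pi)$ and the sign are pinned down by carefully tracking $s = S_0 = K e^{-k}$.

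The main obstacle, and the place deserving genuine care, is Step 1 combined with the analyticity of $\varphi_{X_T}$ on the strip: one must check that in the MMM case (\ref{X-under-MMM}) the constraint $\nu \in (1, \lambda^+ - 1)$ — not merely $\nu < \lambda^+$ — is the right one, since the ``$\lambda^+ - 1$'' tempering parameter of the second convolution factor is the binding one, and this is exactly why the proposition assumes $\lambda^+ > 2$ there rather than just $\lambda^+ > 1$. Everything else — the Fubini interchange, the explicit transform of the truncated-exponential payoff, and the contour shift — is routine once the exponential moment bound is in hand, so I would state those bounds precisely (invoking (\ref{cumulant}), (\ref{cumulant-plus})--(\ref{cumulant-minus}) and Lemma~\ref{lemma-TS-mm}-style reasoning) and then cite \cite{Teichmann} or \cite{Cont-Tankov} for the remaining Carr--Madan steps.
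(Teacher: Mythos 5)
Your argument is correct in substance but takes the Carr--Madan route (damp the call price as a function of log-strike, Fourier-transform, invert), whereas the paper's proof goes via Lewis's Parseval-type theorem: it writes $\pi$ as a contour integral of $\varphi_{\tilde X_T}(-z)\,\hat w(z)$, where $\hat w(z) = -K^{iz+1}/(z(z-i))$ is the generalized Fourier transform of the payoff $w(x)=(e^x-K)^+$ (valid for $\operatorname{Im} z>1$, from Lewis's Table~3.1), and reduces everything to checking that $\varphi_{X_T}$ is analytic on a strip $\{\operatorname{Im} z\in(a,b)\}$ with $a<-1$ and $b>0$ --- which is exactly the exponential-moment condition you isolate in your Step~1. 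The two derivations are dual, and the paper itself remarks after the proposition that the formula can also be obtained from Carr--Madan, so your route is legitimate; its only cost is that you must redo the Fubini and inversion bookkeeping that Lewis's Theorem~3.2 packages for you, while its benefit is that the role of the damping parameter $\nu$ is completely explicit. You also correctly identify the one point of real substance: in the case (\ref{X-under-MMM}) the binding tempering parameter is $\lambda^+-1$, coming from the second convolution factor, which forces $\nu\in(1,\lambda^+-1)$ and hence $\lambda^+>2$. Before your sketch becomes a proof, two offset errors need fixing: with the contour at height $\nu$ the damping factor should be $e^{(\nu-1)k}$ rather than $e^{\nu k}$ (otherwise the required moment is $\mathbb{E}_{\mathbb{Q}}[e^{(\nu+1)X_T}]$ and the contour lands at height $\nu+1$), and the transform of the damped price should read $s\,\varphi_{X_T}(u-i\nu)/\big((\nu+iu)(\nu+iu-1)\big)$, since the argument of $\varphi_{X_T}$ must encode the moment $\mathbb{E}_{\mathbb{Q}}[e^{\nu X_T}]$; the argument $u+i(\nu-1)$ as written would instead involve $\mathbb{E}_{\mathbb{Q}}[e^{-(\nu-1)X_T}]$. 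Relatedly, the strip on which $\varphi_{X_T}$ must be analytic so that $\varphi_{X_T}(-z)$ makes sense on $\{\operatorname{Im}z=\nu\}$ is $\{-\nu\le\operatorname{Im}w\le 0\}$, not $\{0\le\operatorname{Im}w\le\nu\}$. These are sign and off-by-one slips, not conceptual gaps, and your own ``pin down the constants'' step would catch them.
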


\begin{proof}
The stock prices are given by
\begin{align*}
S_t = S_0 \exp \big( (r-q)t + \tilde{X}_t \big), \quad t \geq 0,
\end{align*}
where $\tilde{X}$ denotes the L\'{e}vy process given by $\tilde{X}_t = X_t - (r-q) t$ for $t \geq 0$. Moreover, the Fourier transform of the payoff function $w(x) = (e^x - K)^+$ is given by
\begin{align*}
\hat{w}(z) = - \frac{K^{iz + 1}}{z(z-i)}, \quad z \in \mathbb{C} \text{ with } {\rm Im} \, z > 1,
\end{align*}
see Table 3.1 in \cite{Lewis}. Furthermore, the characteristic function (\ref{cf-tempered-stable}) is analytic on the strip $\{ z \in \mathbb{C} : {\rm Im} \, z \in (-\lambda^+,\lambda^-) \}$ by the analyticity of the power function $z \mapsto z^{\beta}$ on the main branch of the complex logarithm for $\beta \in (0,1)$. By our parameter restriction on $\lambda^+$, and possibly taking into account (\ref{X-under-MMM}), we deduce that $\varphi_{X_T}$ is analytic on a strip of the form $\{ z \in \mathbb{C} : {\rm Im} \, z \in (a,b) \}$ with $a < -1$ and $b > 0$. Consequently, \cite[Theorem~3.2]{Lewis} applies and provides us with the call option price
\begin{align*}
\pi &= \frac{e^{-rT}}{2 \pi} \int_{i \nu - \infty}^{i \nu + \infty} e^{-iz (\ln S_0 + (r-q)T)} \varphi_{\tilde{X}_T}(-z) \hat{w}(z) dz
\\ &= - \frac{e^{-rT}}{2 \pi} \int_{i \nu - \infty}^{i \nu + \infty} S_0^{-iz} e^{-iz (r-q)T} \varphi_{X_T}(-z) e^{iz(r-q)T} \frac{K^{iz + 1}}{z(z-i)} dz
\\ &= - \frac{e^{-rT} K}{2 \pi} \int_{i \nu - \infty}^{i \nu + \infty} \bigg( \frac{K}{S_0} \bigg)^{iz} \frac{\varphi_{X_T}(-z)}{z(z-i)} dz,
\end{align*}
which proves (\ref{option-Carr-Madan}). 
\end{proof}

\begin{remark}
Proposition~\ref{prop-formula} does not apply for the boundary case $\lambda^+ = 1$ (or $\lambda^+ = 2$, respectively), although $\mathbb{Q}$ might be a martingale measure in this situation, cf. Lemma~\ref{lemma-TS-mm}. The point is that in this case the characteristic function $\varphi_{X_T}$ is not analytic on a strip of the form $\{ z \in \mathbb{C} : {\rm Im} \, z \in (a,b) \}$ with $a < -1$ and $b > 0$, and hence, the option pricing formula from \cite{Lewis} does not apply. 
\end{remark}

\begin{remark}
The option pricing formula (\ref{option-Carr-Madan}) can also be derived from \cite[Section~3.1]{Carr-Madan}. 
\end{remark}

Taking into account the characteristic function (\ref{cf-tempered-stable}) and relation (\ref{distribution-TS-time}), applying Proposition~\ref{prop-formula} yields the following pricing formulae:
\begin{itemize}
\item Performing the Esscher transform from Section~\ref{sec-Esscher}, for an arbitrary $\nu \in (1,\lambda^+ - \Theta)$ we obtain
\begin{equation}\label{pricing-Esscher}
\begin{aligned}
\pi &= - \frac{e^{-rT} K}{2 \pi} \int_{i \nu - \infty}^{i \nu +
\infty} \left( \frac{K}{S_0} \right)^{iz} \exp \Big( \alpha^+ T \, \Gamma(-\beta^+) \big[ (\lambda^+ - \Theta + iz)^{\beta^+} - (\lambda^+ - \Theta)^{\beta^+} \big] 
\\ &\quad\quad\quad\quad\quad\quad\quad + \alpha^- T \, \Gamma(-\beta^-) \big[ (\lambda^- + \Theta - iz)^{\beta^-} - (\lambda^- + \Theta)^{\beta^-} \big] \Big) \frac{1}{z(z-i)} dz.
\end{aligned}
\end{equation}

\item Performing the bilateral Esscher transform from Section~\ref{sec-bilateral-Esscher}, for an arbitrary $\nu \in (1,\lambda^+ - \theta)$ we obtain
\begin{equation}\label{pricing-entropy}
\begin{aligned}
\pi &= - \frac{e^{-rT} K}{2 \pi} \int_{i \nu - \infty}^{i \nu +
\infty} \left( \frac{K}{S_0} \right)^{iz} \exp \Big( \alpha^+ T \, \Gamma(-\beta^+) \big[ (\lambda^+ - \theta + iz)^{\beta^+} - (\lambda^+ - \theta)^{\beta^+} \big] 
\\ &\quad\quad\quad\quad\quad\quad + \alpha^- T \, \Gamma(-\beta^-) \big[ (\lambda^- - \Phi(\theta) - iz)^{\beta^-} - (\lambda^- - \Phi(\theta))^{\beta^-} \big] \Big) \frac{1}{z(z-i)} dz.
\end{aligned}
\end{equation}

\item Performing option pricing under the FS minimal martingale measure from Section~\ref{sec-MMM}, for an arbitrary $\nu \in (1,\lambda^+ - 1)$ we obtain
\begin{equation}\label{pricing-MMM}
\begin{aligned}
\pi &= - \frac{e^{-rT} K}{2 \pi} \int_{i \nu - \infty}^{i \nu +
\infty} \left( \frac{K}{S_0} \right)^{iz} \exp \Big( (c+1) \alpha^+ T \, \Gamma(-\beta^+) \big[ (\lambda^+ + iz)^{\beta^+} - (\lambda^+)^{\beta^+} \big] 
\\ &\quad\quad\quad\quad\quad\quad\quad + (c+1) \alpha^- T \, \Gamma(-\beta^-) \big[ (\lambda^- - iz)^{\beta^-} - (\lambda^-)^{\beta^-} \big]
\\ &\quad\quad\quad\quad\quad\quad\quad - c \alpha^+ T \, \Gamma(-\beta^+) \big[ (\lambda^+ - 1 + iz)^{\beta^+} - (\lambda^+ - 1)^{\beta^+} \big]
\\ &\quad\quad\quad\quad\quad\quad\quad - c \alpha^- T \, \Gamma(-\beta^-) \big[ (\lambda^- + 1 - iz)^{\beta^-} - (\lambda^- + 1)^{\beta^-} \big] \Big) \frac{1}{z(z-i)} dz,
\end{aligned}
\end{equation}
where the constant $c$ is given by (\ref{def-c}).
\end{itemize}

\section{A case study}\label{sec-case-study}

In order to illustrate our previous results, we shall perform a case study in this section. Figure~\ref{fig-dax} shows historical values of the German stock index DAX from January~3, 2011 until December~28, 2012, and the corresponding log returns. These data are available at {\tt http://www.finanzen.net/index/DAX/Historisch}.

\begin{figure}[!ht]
 \centering
 \includegraphics[width=0.4\textwidth]{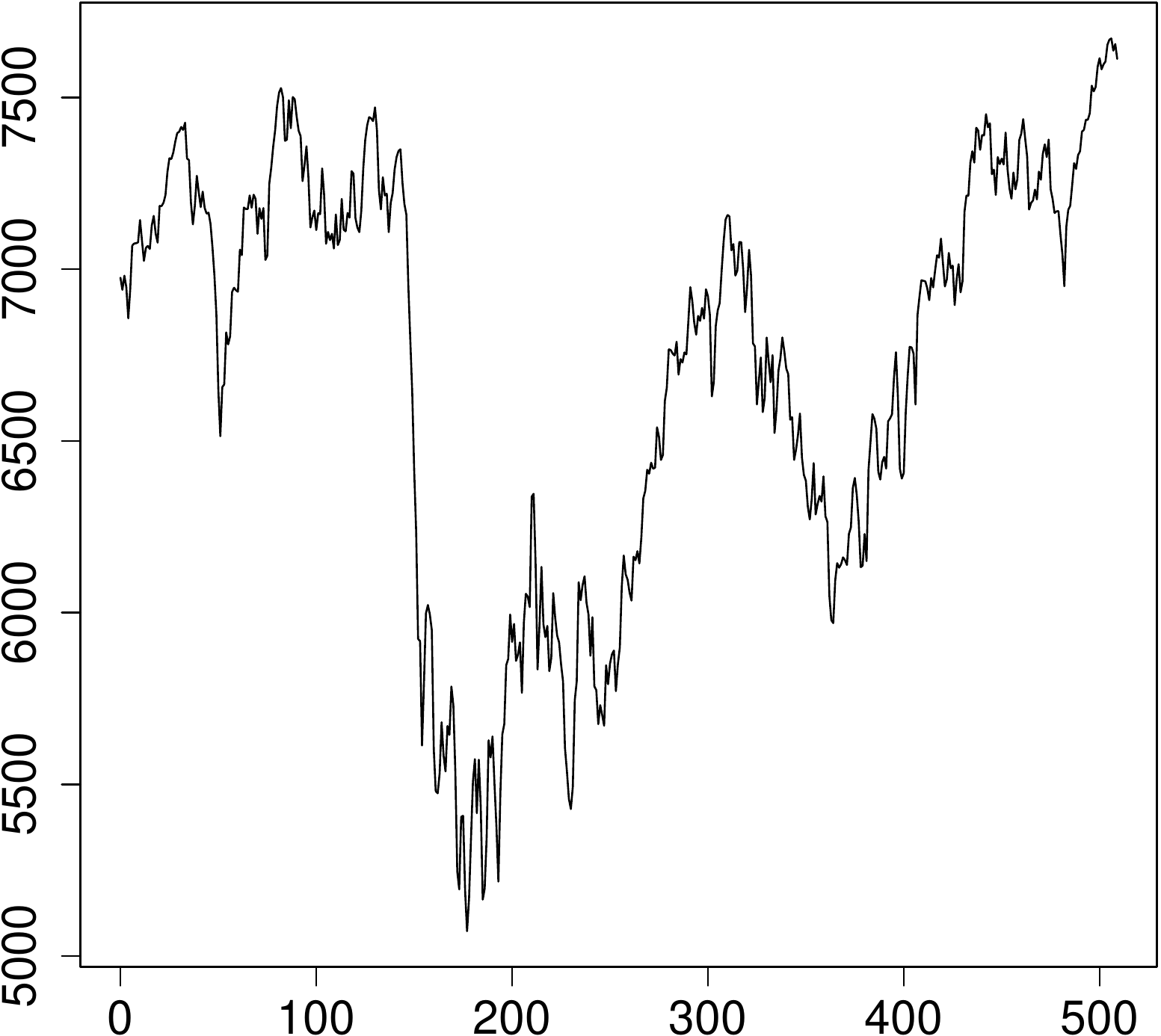}
 \includegraphics[width=0.4\textwidth]{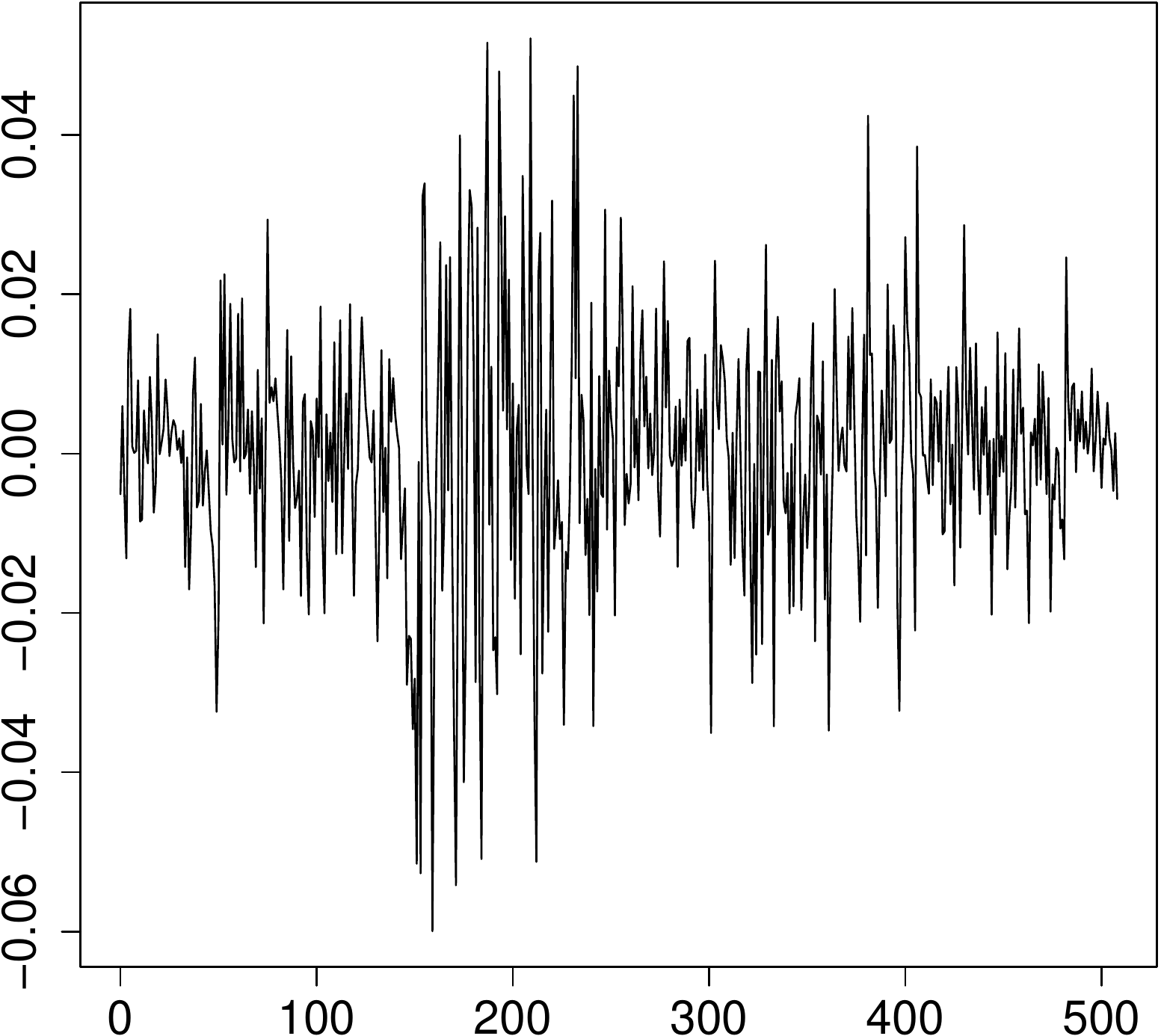}
 \caption{The left plot shows the values of the German stock index DAX from January~3, 2011 until December~28, 2012. The right plot shows the corresponding log returns.}\label{fig-dax}
\end{figure}

In the sequel, the time $t$ is measured in trading days.
The data set consists of $510$ observations, which corresponds to a period of two years. Figure~\ref{fig-hist} below shows a histogram for the log returns. In order to estimate the parameters from these historical data by the method of moments, we determine the empirical moments up to order $4$, which are given by
\begin{align}\label{m1}
m_1 &= 1.717 \cdot 10^{-4},
\\ \label{m2} m_2 &= 2.338 \cdot 10^{-4},
\\ \label{m3} m_3 &= -6.363 \cdot 10^{-7},
\\ \label{m4} m_4 &= 2.716 \cdot 10^{-7}.
\end{align}
Then the parameters with a driving Wiener process $X \sim {\rm N}(\mu,\sigma^2)$ are estimated as
\begin{align}\label{para-Black-Scholes}
\mu = 1.717 \cdot 10^{-4} \quad \text{and} \quad \sigma = 1.529 \cdot 10^{-2}.
\end{align}
The fitted density is shown in the left plot of Figure~\ref{fig-hist}. 

\begin{figure}[!ht]
 \centering
 \includegraphics[width=0.4\textwidth]{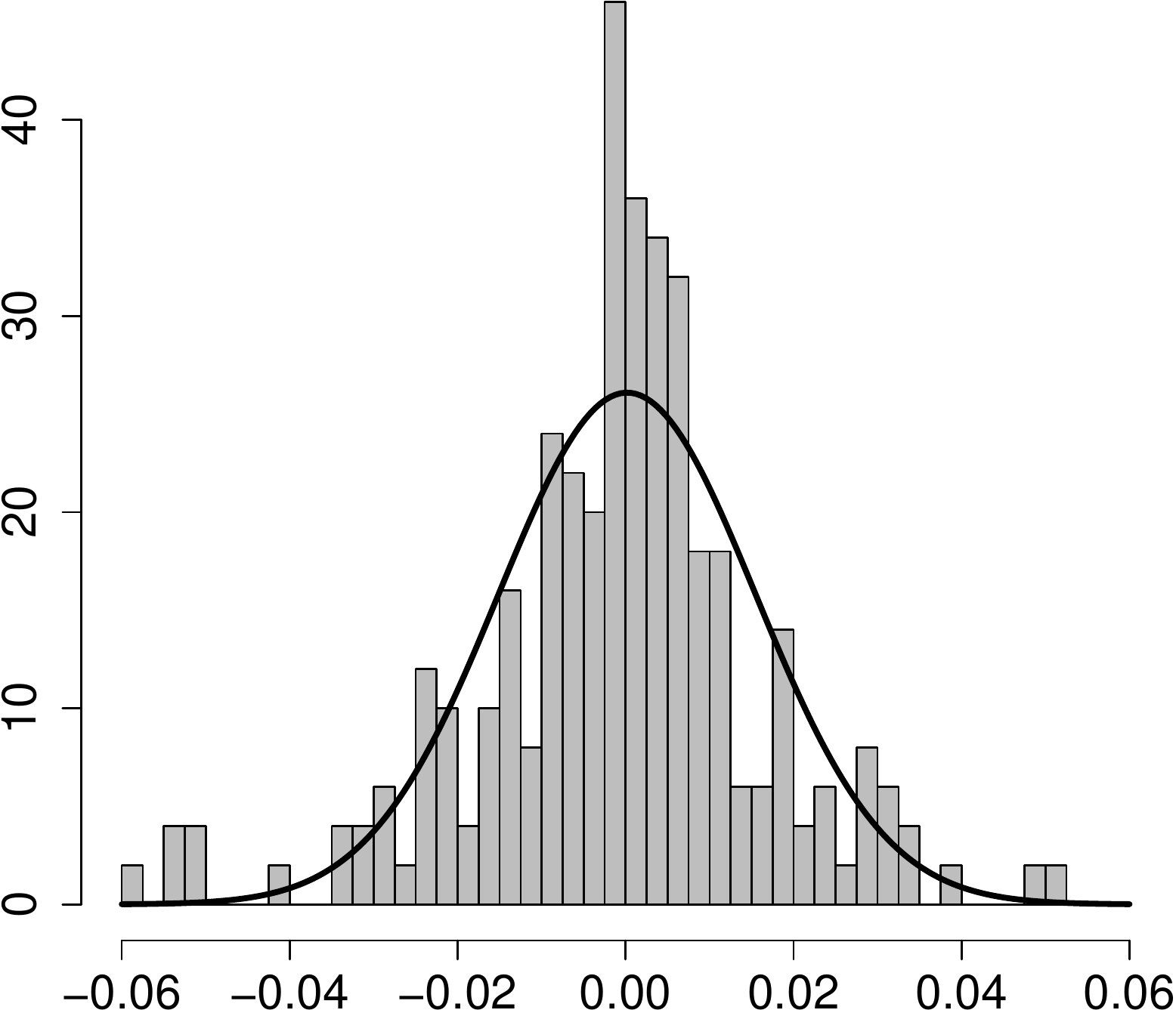}
 \includegraphics[width=0.4\textwidth]{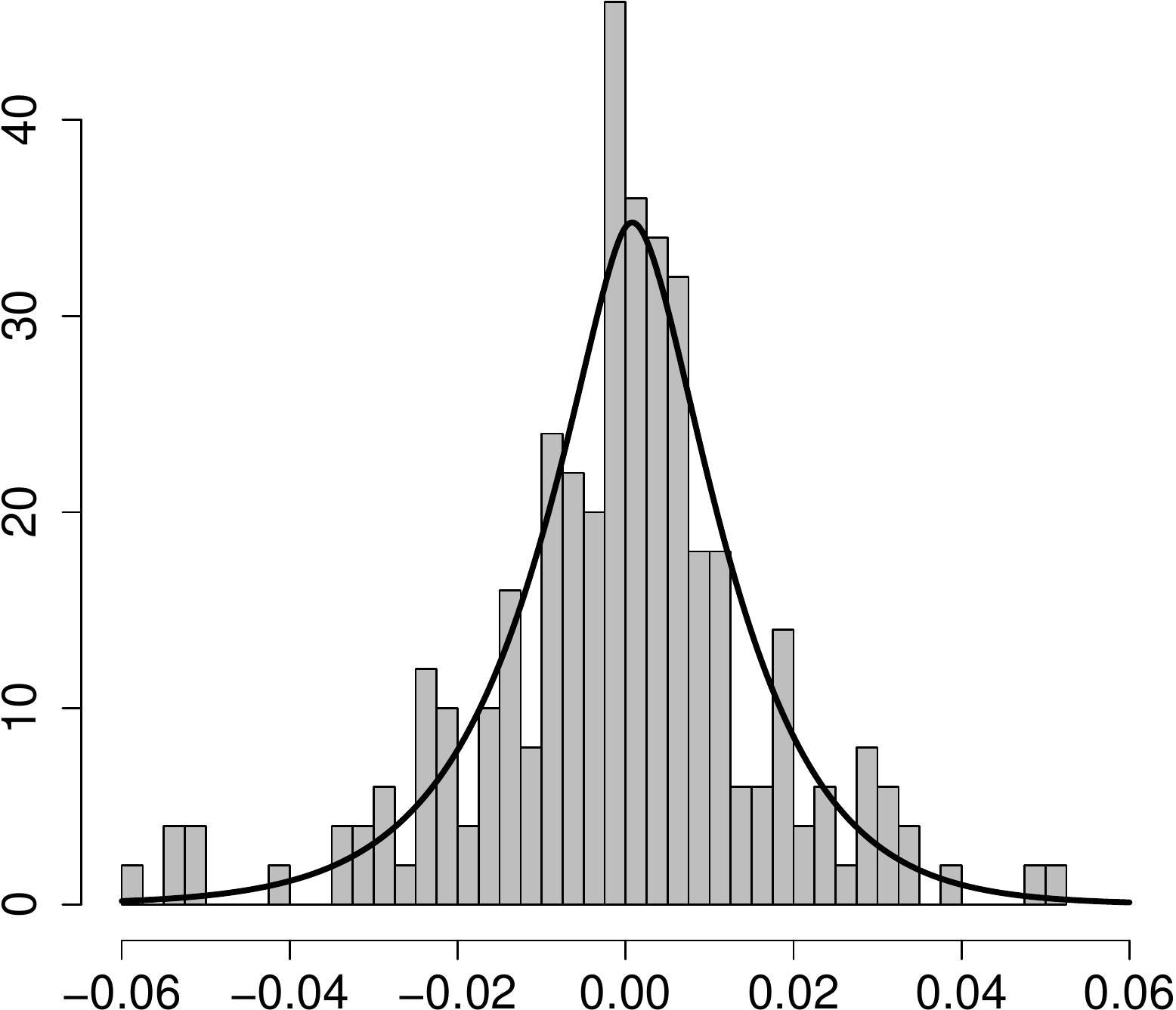}
 \caption{Histogram for the log returns together with the fitted normal distribution in the left plot and the fitted tempered stable distribution in the right plot.}\label{fig-hist}
\end{figure}

It has already been documented in several case studies that the Black Scholes model does not provide a good fit to observed log returns of financial data, and this also shows up here.
Therefore, we consider a tempered stable process $X$ of the form (\ref{X-TS}). Recall that for $\beta^+ = \beta^- = 0$ we would have a bilateral Gamma process. We slightly deviate from this situation by choosing
\begin{align}\label{beta-est}
\beta := \beta^+ = \beta^- = 0.1. 
\end{align}
In order to estimate the remaining parameters by the method of moments, we have to solve the system of equations
\begin{align}\label{system-equations}
\left\{
\begin{array}{rcl}
\alpha^+ (\lambda^-)^{1-\beta} - \alpha^- (\lambda^+)^{1-\beta} - c_1 (\lambda^+)^{1-\beta} (\lambda^-)^{1-\beta} & = & 0
\\ \alpha^+ (\lambda^-)^{2-\beta} + \alpha^- (\lambda^+)^{2-\beta} - c_2 (\lambda^+)^{2-\beta} (\lambda^-)^{2-\beta} & = & 0
\\ \alpha^+ (\lambda^-)^{3-\beta} - \alpha^- (\lambda^+)^{3-\beta} - c_3 (\lambda^+)^{3-\beta} (\lambda^-)^{3-\beta} & = & 0
\\ \alpha^+ (\lambda^-)^{4-\beta} + \alpha^- (\lambda^+)^{4-\beta} - c_4 (\lambda^+)^{4-\beta} (\lambda^-)^{4-\beta} & = & 0,
\end{array}
\right.
\end{align}
where $c_1, c_2, c_3, c_4$ are given by 
\begin{align*}
c_1 &= m_1 / \Gamma(1-\beta),
\\ c_2 &= ( m_2 - m_1^2 ) / \Gamma(2-\beta),
\\ c_3 &= ( m_3 - 3m_1 m_2 + 2m_1^3 ) / \Gamma(3-\beta),
\\ c_4 &= ( m_4 - 4m_1 m_3 - 3m_2^2 + 12 m_1^2 m_2 - 6m_1^4 ) / \Gamma(4-\beta),
\end{align*}
see \cite[Section~6]{Kuechler-Tappe-TS} for further details. The solution of (\ref{system-equations}) is given by
\begin{align}\label{remaining-est}
\alpha^+ = 1.0260, \quad \alpha^- = 0.8506, \quad \lambda^+ = 122.58, \quad \lambda^- = 100.86.
\end{align}
The right plot in Figure~\ref{fig-hist} shows the fitted tempered stable density. Recall that the densities of tempered stable distributions are generally not available in closed form. For the right plot in Figure~\ref{fig-hist} we have used the inversion formula
\begin{align*}
f(x) &= \frac{1}{2\pi} \int_{\mathbb{R}} \exp \Big( -ixz + \alpha^+ \Gamma(-\beta^+) \big[ (\lambda^+ - iz)^{\beta^+} - (\lambda^+)^{\beta^+} \big] 
\\ &\qquad\qquad\qquad\,\, + \alpha^- \Gamma(-\beta^-) \big[ (\lambda^+ + iz)^{\beta^-} - (\lambda^-)^{\beta^-} \big] \Big) dz,
\end{align*}
which follows from (\ref{cf-tempered-stable}) and \cite[Lemma 28.5, Proposition 2.5.xii]{Sato}.

In the sequel, we suppose that under the real-world probability measure $\mathbb{P}$ the process $X$ is a tempered stable process of the form (\ref{X-TS}) with (\ref{beta-est}) and estimated parameters (\ref{remaining-est}). As the time $t$ is measured in trading days, the interest rate $r$ denotes the daily interest rate. We suppose that it is given by $r = 0.01 / 255$, which corresponds to an annualized interest rate $r_a$ of $1\%$. Moreover, we suppose that $q = 0$, i.e., the stock does not pay dividends. 

Based on these data, we will illustrate our results from  Sections~\ref{sec-Esscher}--\ref{sec-MMM} concerning the existence of equivalent martingale measures. 

\begin{figure}[!ht]
 \centering
 \includegraphics[width=0.4\textwidth]{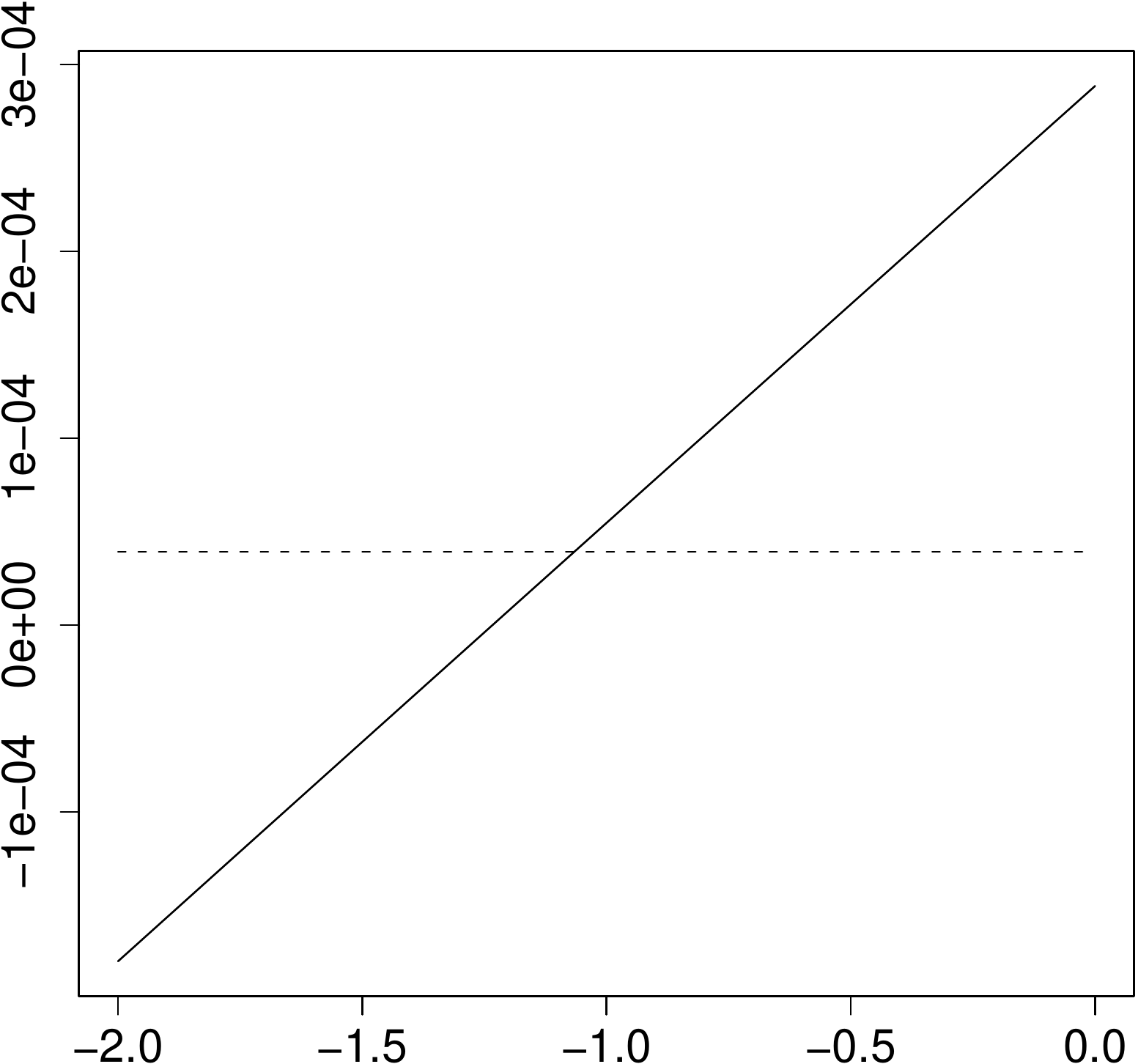}
 \includegraphics[width=0.4\textwidth]{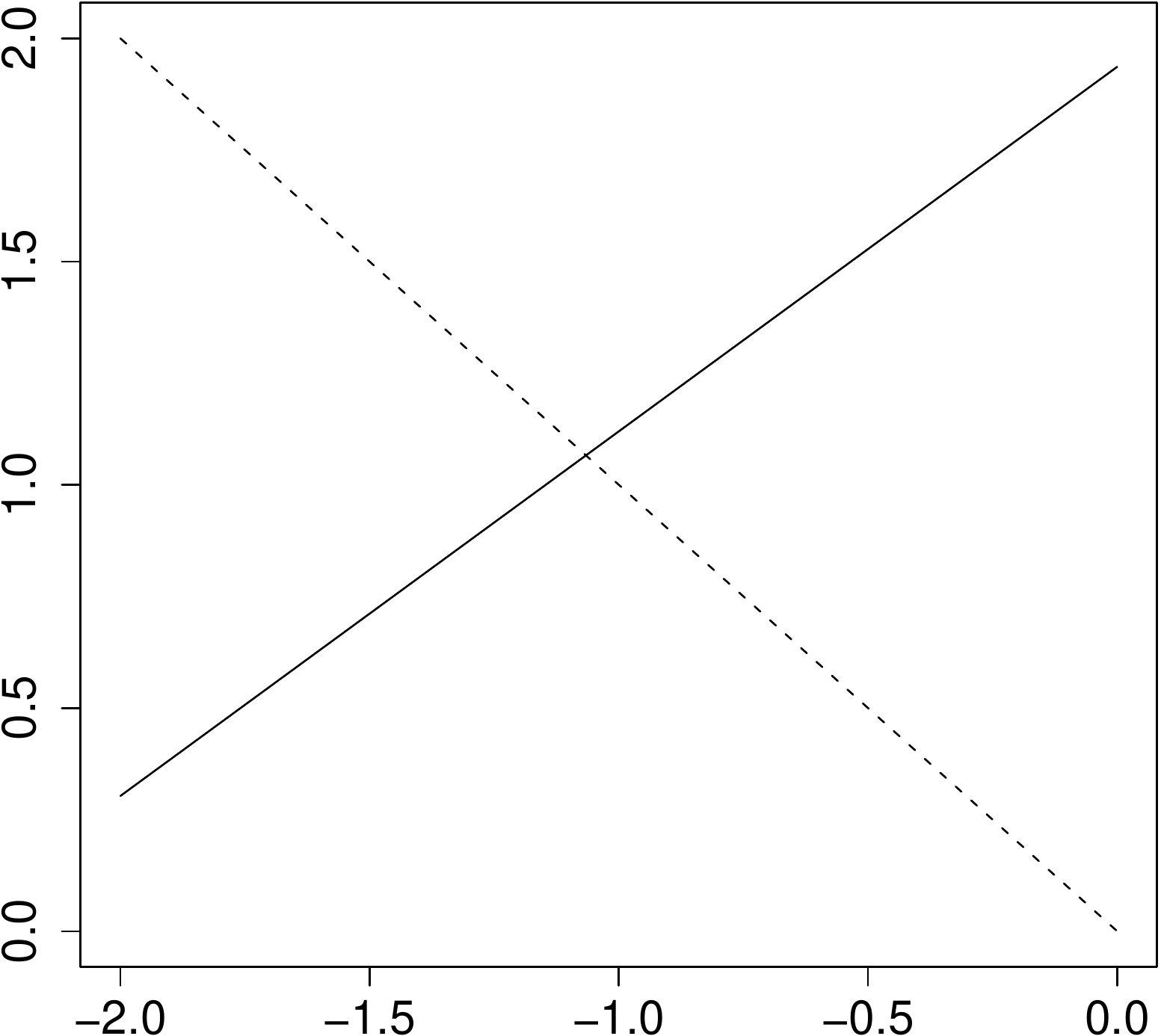}
 \caption{The left plot shows the function $f$ from Section~\ref{sec-Esscher} together with the interest rate $r$ as dashed line. The right plot shows the graph of $\Phi$ together with the graph of $\Theta \mapsto -\Theta$ as dashed line.}\label{fig-Esscher}
\end{figure}

First, we consider the Esscher martingale measure from Section~\ref{sec-Esscher}. The left plot in Figure~\ref{fig-Esscher} shows the function $f : [-\lambda^-,\lambda^+ - 1] \rightarrow \mathbb{R}$ on the interval $[-2,0]$, together with the interest rate $r$ as dashed line. 
As this plot indicates, condition (\ref{cond-domain-temp}) is fulfilled and the solution of equation (\ref{Esscher-equation}) is given by 
\begin{align}
\Theta = -1.0659. 
\end{align}
Therefore, the Esscher martingale measure $\mathbb{P}^{\Theta}$ exists. Alternatively, this can be seen by inspecting the right plot in Figure~\ref{fig-Esscher}, which shows the function $\Phi$ from Proposition~\ref{prop-mm-bilateral} on the interval $[-2,0]$, together with the graph of $\Theta \mapsto -\Theta$ as dashed line. The graph of $\Phi$ represents all martingale measures $\mathbb{P}^{(\theta,\Phi(\theta))}$ which preserve the class of tempered stable processes, and the dashed line represents all Esscher transforms $\mathbb{P}^{\Theta} = \mathbb{P}^{(\Theta,-\Theta)}$. Therefore, the intersection point corresponds to the just determined Esscher martingale measure $\mathbb{P}^{\Theta}$.

\begin{figure}[!ht]
 \centering
 \includegraphics[width=0.4\textwidth]{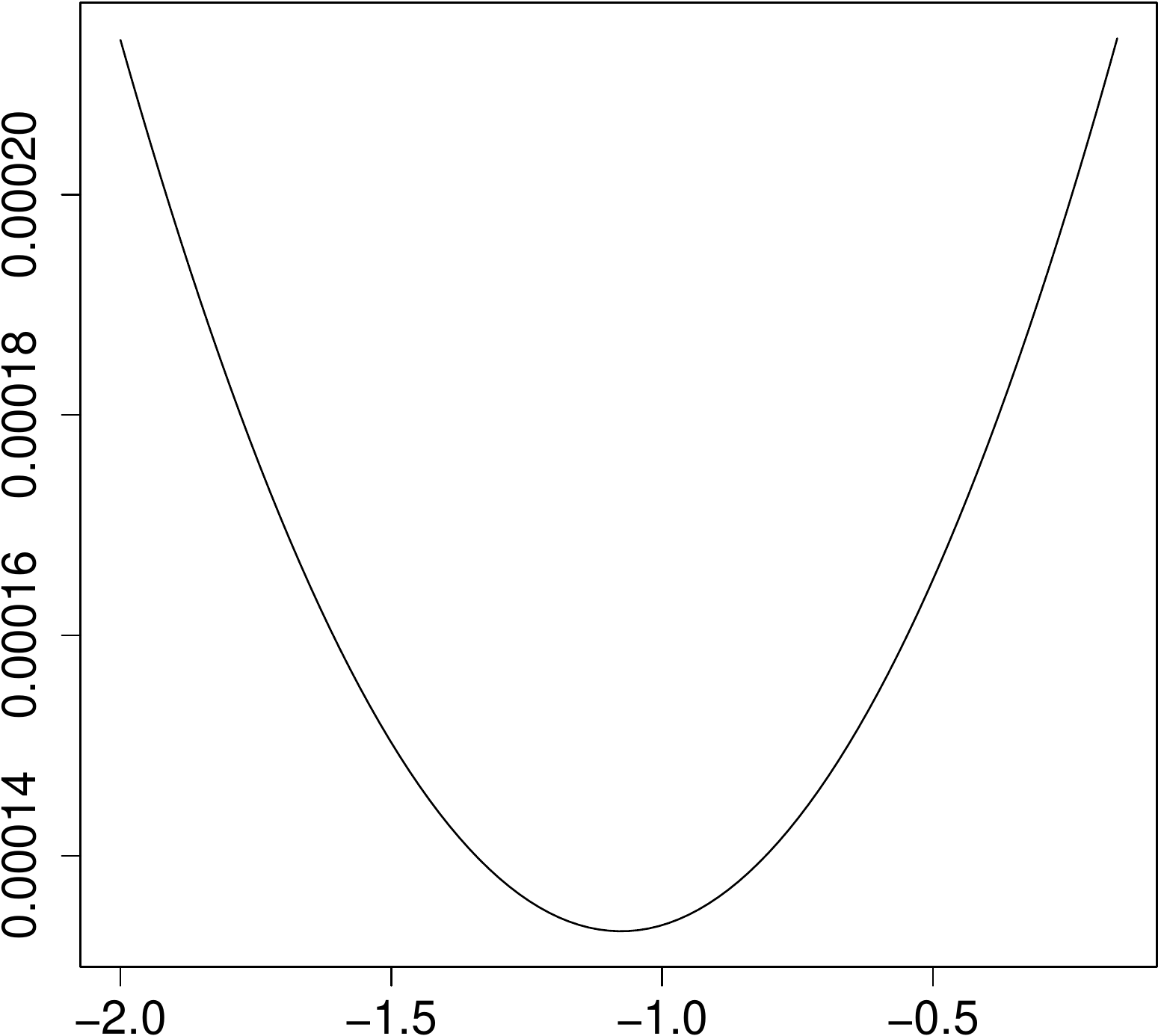}
 \includegraphics[width=0.4\textwidth]{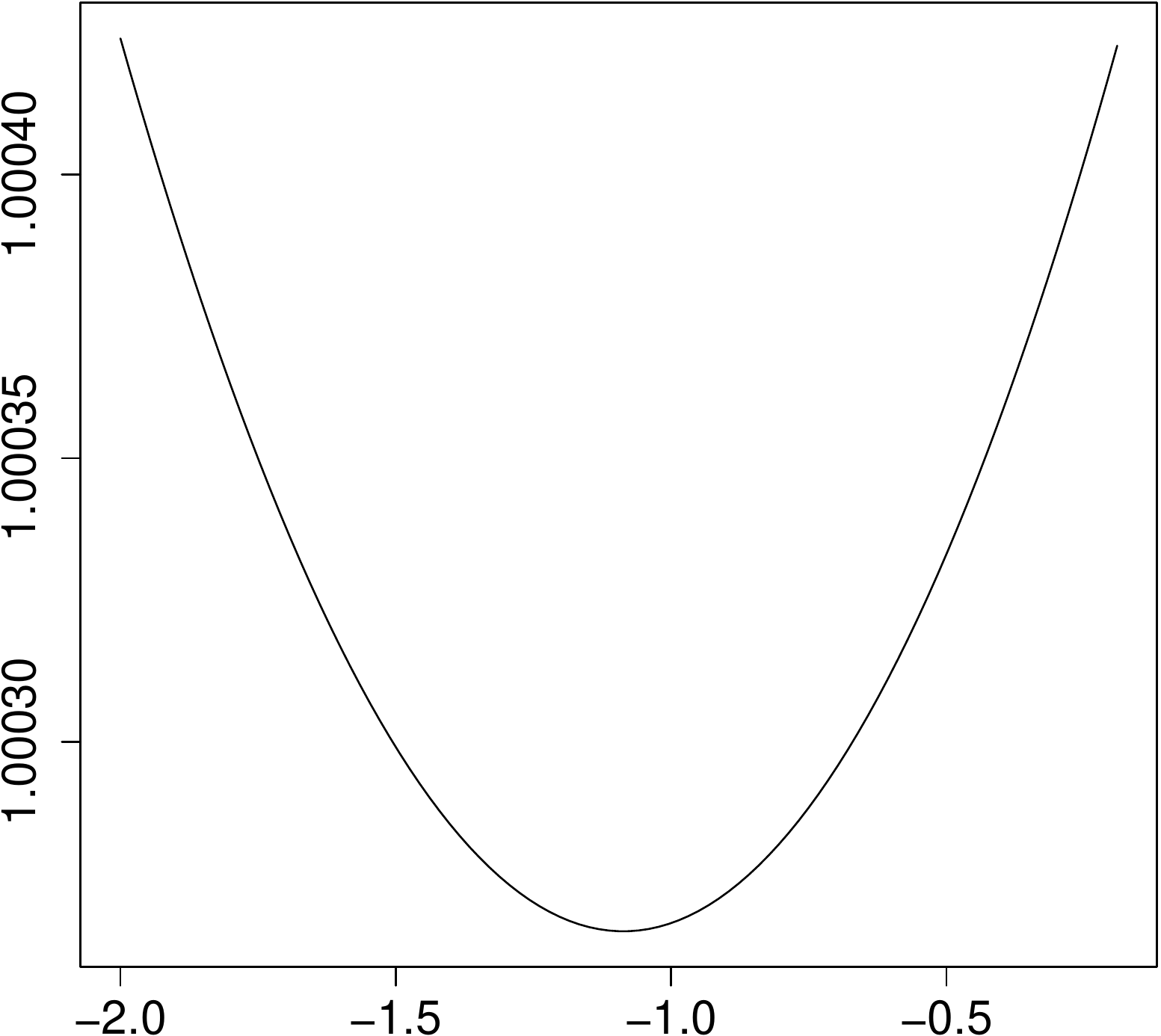}
 \caption{The left plot shows the relative entropies on the interval $[-2,0]$, where the $x$-axis is $\theta$ and the $y$-axis is $\mathbb{H}(\mathbb{P}^{(\theta,\Phi(\theta))} \,|\, \mathbb{P})$. The right plot shows the $2$-distances on the interval $[-2,0]$, where the $x$-axis is $\theta$ and the $y$-axis is $\mathbb{H}_2(\mathbb{P}^{(\theta,\Phi(\theta))} \,|\, \mathbb{P})$.}\label{fig-min}
\end{figure}

Next, we treat the existence of the minimal bilateral Esscher martingale measures from Section~\ref{sec-bilateral-Esscher}. The left plot in Figure~\ref{fig-min} shows the relative entropies $\theta \mapsto \mathbb{H}(\mathbb{P}^{(\theta,\Phi(\theta))} \,|\, \mathbb{P})$ on the interval $[-2,0]$; it indicates that the minimal entropy martingale measure within the class of bilateral Esscher transforms is attained for 
\begin{align}\label{theta-1-est}
\theta_1 = -1.0760. 
\end{align}
The right plot in Figure~\ref{fig-min} shows the $2$-distances $\theta \mapsto \mathbb{H}_2(\mathbb{P}^{(\theta,\Phi(\theta))} \,|\, \mathbb{P})$ on the interval $[-2,0]$; it indicates that the variance-optimal martingale measure within the class of bilateral Esscher transforms is attained for 
\begin{align}\label{theta-2-est}
\theta_2 = -1.0868.
\end{align}

\begin{figure}[!ht]
 \centering
 \includegraphics[width=0.5\textwidth]{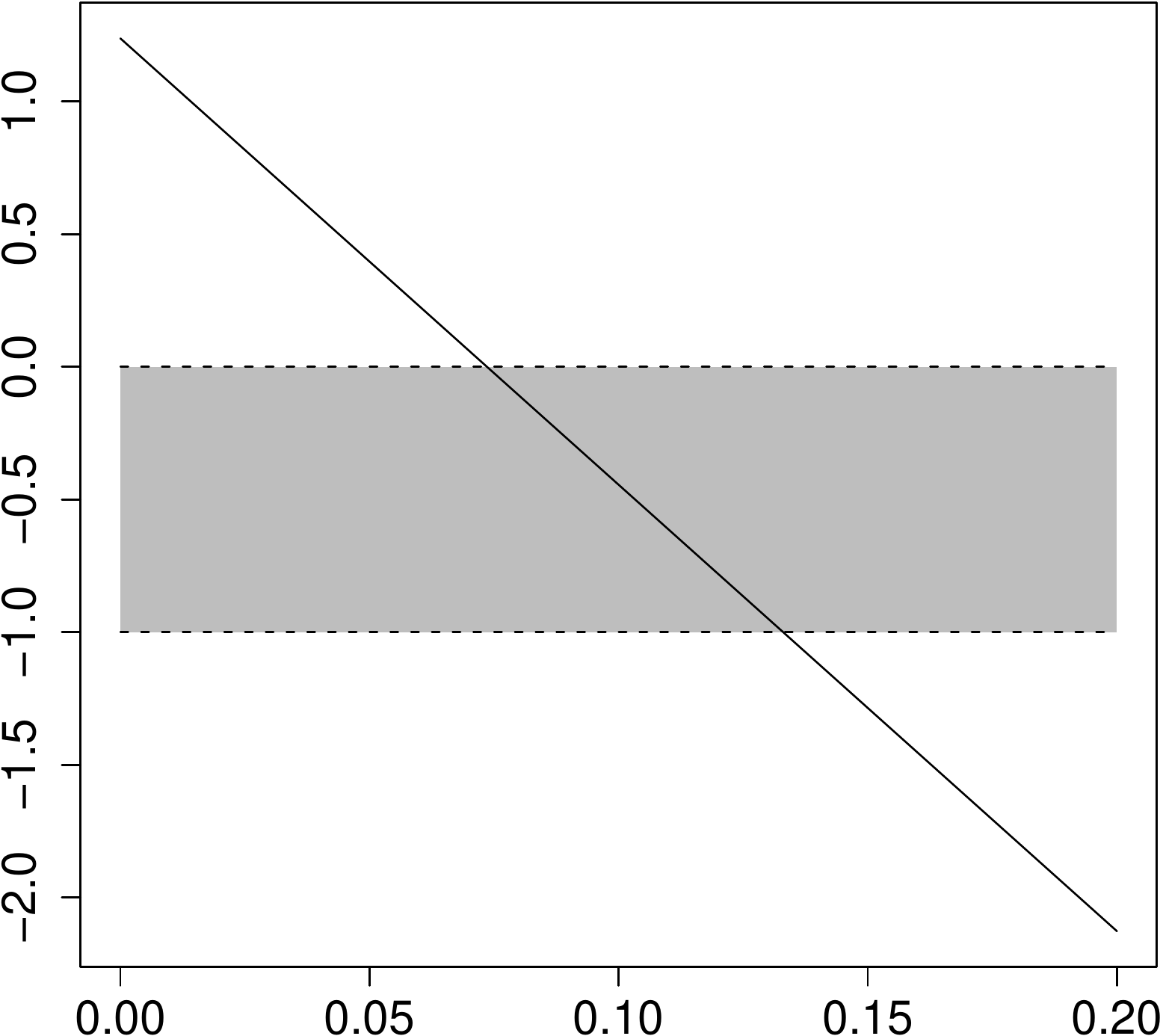}
 \caption{The function $r_a \mapsto c(r_a)$ defined according to (\ref{fct-const}) on the interval $[0,0.2]$, where the $x$-axis is the annualized interest rate $r_a$. The shaded area indicates the values of $r_a$ for which the FS minimal martingale measure exists.}\label{fig-mmm}
\end{figure}

Finally, we treat the existence of the FS minimal martingale measure from Section~\ref{sec-MMM}. For this purpose, it will be useful to consider the annualized interest rate $r_a$. Figure~\ref{fig-mmm} shows the function
\begin{align}\label{fct-const}
r_a \mapsto c(\alpha^+,\alpha^-,\beta^+,\beta^-,\lambda^+,\lambda^-,r_a/255,q)
\end{align}
defined according (\ref{def-c}) with varying annualized interest rate $r_a$ on the interval $[0,0.2]$. According to Theorem~\ref{thm-mmm}, the FS minimal martingale measure exists if and only if $-1 \leq c \leq 0$, that is, the values of $c$ belong to the shaded area in Figure~\ref{fig-mmm}. We see that the FS minimal martingale measure exists if and only if $0.0736 \leq r_a \leq 0.1330$, that is, the annual interest rate is between $7.36\%$ and $13.30\%$. In particular, in our model with an annual interest rate of $1\%$ the FS minimal martingale measure does not exist.

\begin{figure}[!ht]
 \centering
 \includegraphics[width=0.4\textwidth]{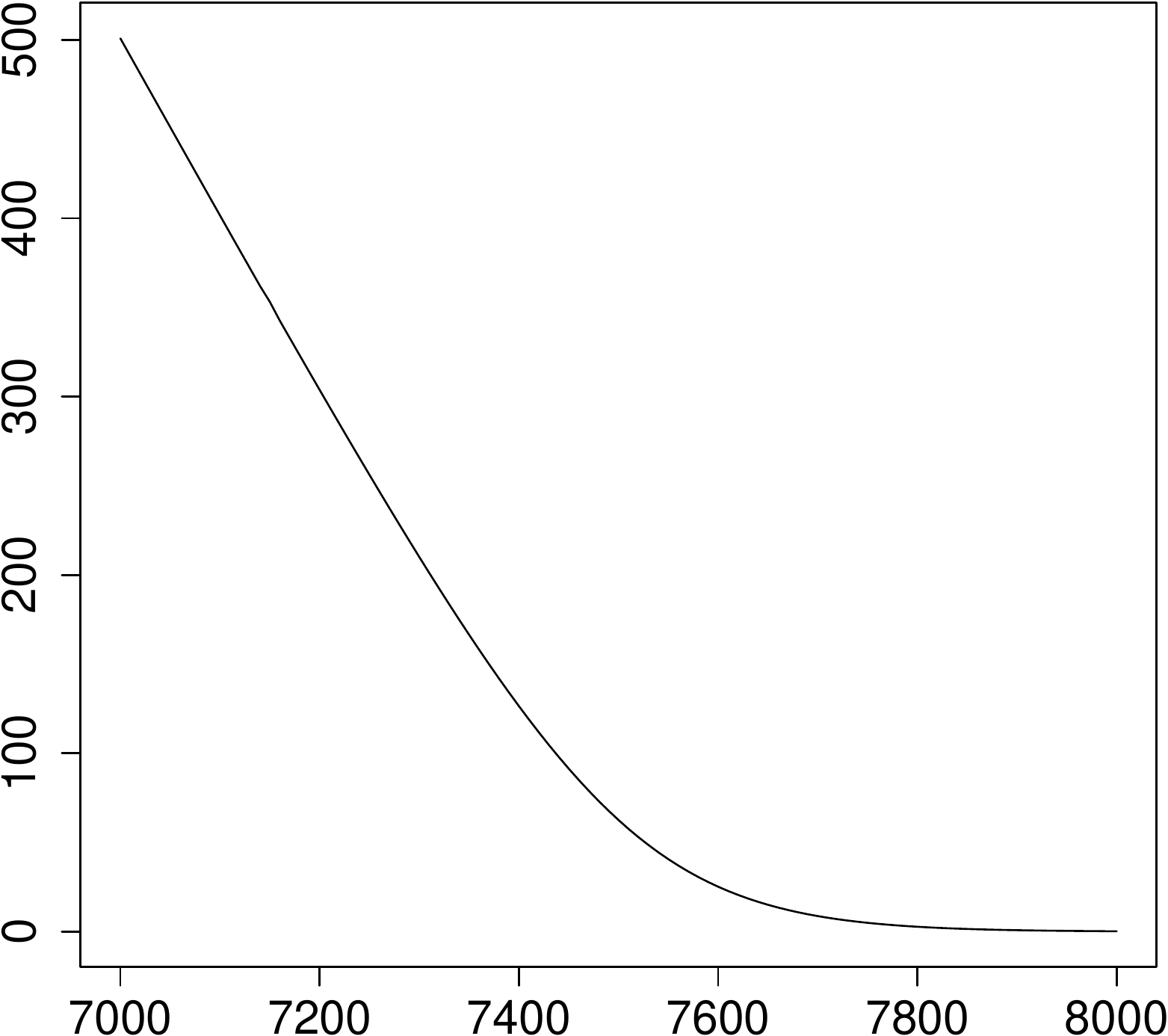}
 \includegraphics[width=0.4\textwidth]{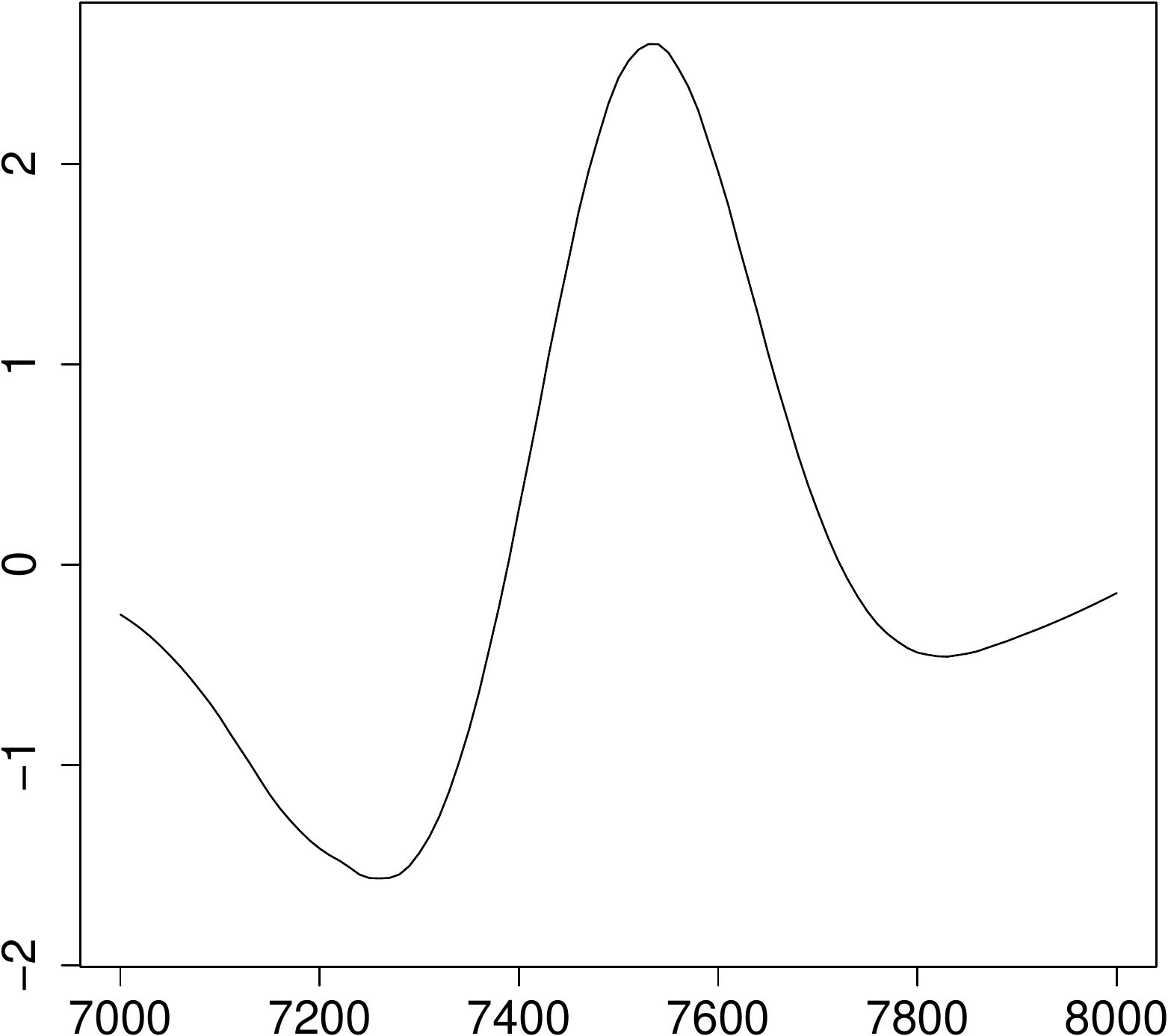}
 \caption{The left plot shows the prices of European call options with $S_0 = 7500$, $T = 2$ and $K \in [7000,8000]$, where the $x$-axis is the strike price $K$, computed with the minimal entropy martingale measure. The right plot shows these prices minus the corresponding Black Scholes prices.}\label{fig-prices}
\end{figure}

In the sequel, we shall illustrate our results from Section~\ref{sec-option} concerning option pricing formulae.

The left plot in Figure~\ref{fig-prices} shows the prices of European call options with current stock price $S_0 = 7500$, date of maturity $T = 2$ and strike prices $K$ varying from $7000$ to $8000$. We have computed these prices with the minimal entropy martingale measure, i.e. with formula (\ref{pricing-entropy}), where $\theta = \theta_1$ is given by (\ref{theta-1-est}), and where the model parameters are given by (\ref{beta-est}) and (\ref{remaining-est}). The right plot in Figure~\ref{fig-prices} shows the difference between these prices and the corresponding Black Scholes prices.

\begin{figure}[!hb]
 \centering
 \includegraphics[width=0.6\textwidth]{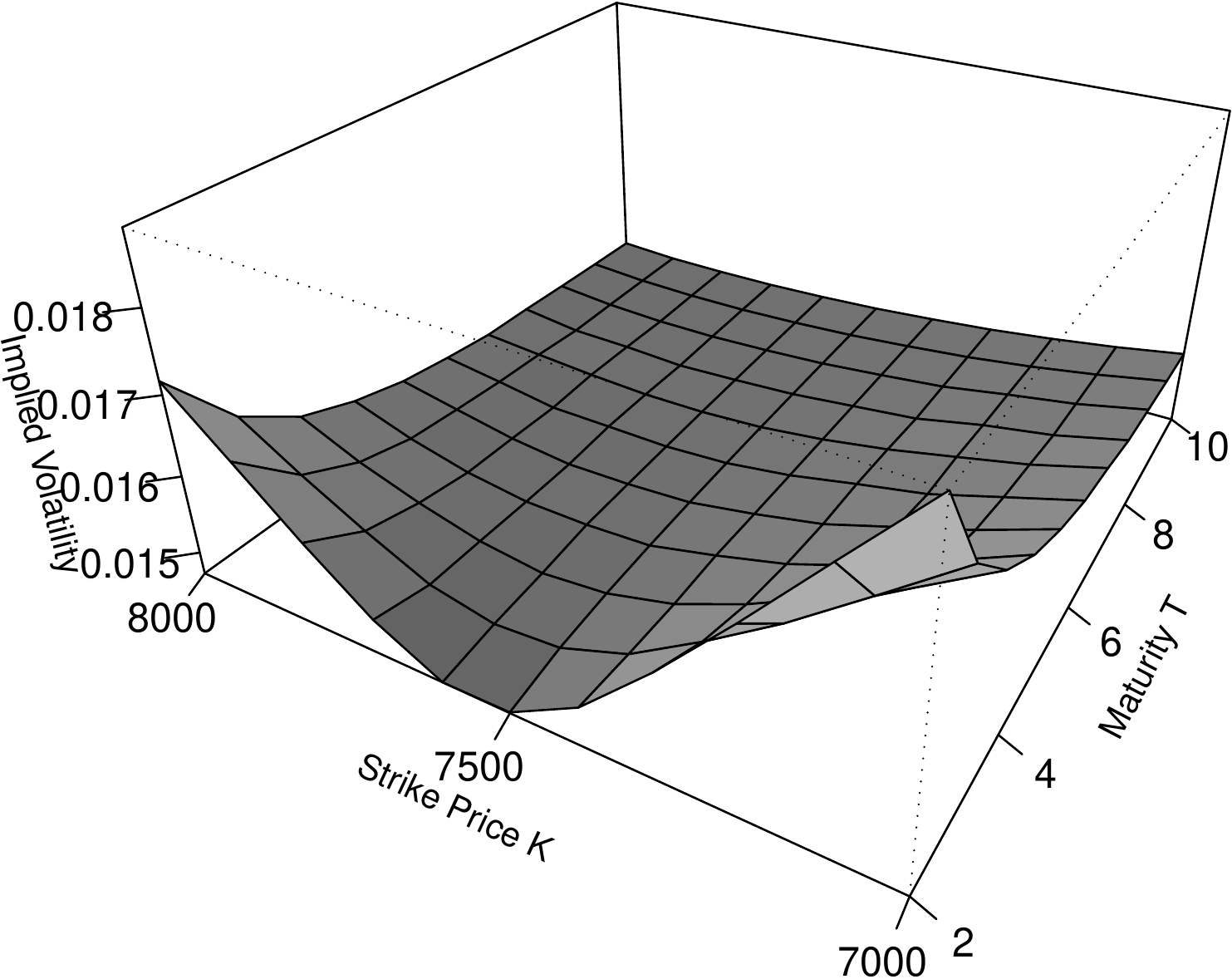}
 \caption{The implied volatility surface computed with the minimal entropy martingale measure. The parameters for the call option are $S_0 = 7500$, $T \in [2,10]$ and $K \in [7000,8000]$.}\label{fig-implied_vol}
\end{figure}

Figure~\ref{fig-implied_vol} shows the implied volatility surface with current stock price $S_0 = 7500$, maturity dates $T$ varying from $2$ to $10$, and strike prices $K$ varying from $7000$ to $8000$. For this procedure, we have computed option prices with the minimal entropy martingale measure, i.e. with formula (\ref{pricing-entropy}), where $\theta = \theta_1$ is given by (\ref{theta-1-est}), and where the model parameters are given by (\ref{beta-est}) and (\ref{remaining-est}), and inverted the Black Scholes formula for the standard deviation $\sigma$.
We observe a volatility smile for $T=2$, which flattens out for longer times of maturity and converges to the standard deviation $\sigma$ of the Black Scholes model, which we have estimated in (\ref{para-Black-Scholes}). For $T=10$ the implied volatility curve shown in Figure~\ref{fig-implied_vol} behaves almost like the constant function which is equal to $\sigma$ estimated in (\ref{para-Black-Scholes}); this flat behaviour does not change for larger times of maturity $T$. Our empirical observation is not surprising, as we have shown in \cite[Theorem~4.10]{Kuechler-Tappe-TS} that, for a tempered stable process $X$ and a Brownian motion $W$ with the same mean and variance, the distributions of $X_t$ and $W_t$ are close to each other for large time points $t$; see also \cite[Theorem~3.1.ii]{Rosinski} for an investigation of the long time behaviour of tempered stable processes.

\begin{figure}[!ht]
 \centering
 \includegraphics[width=0.6\textwidth]{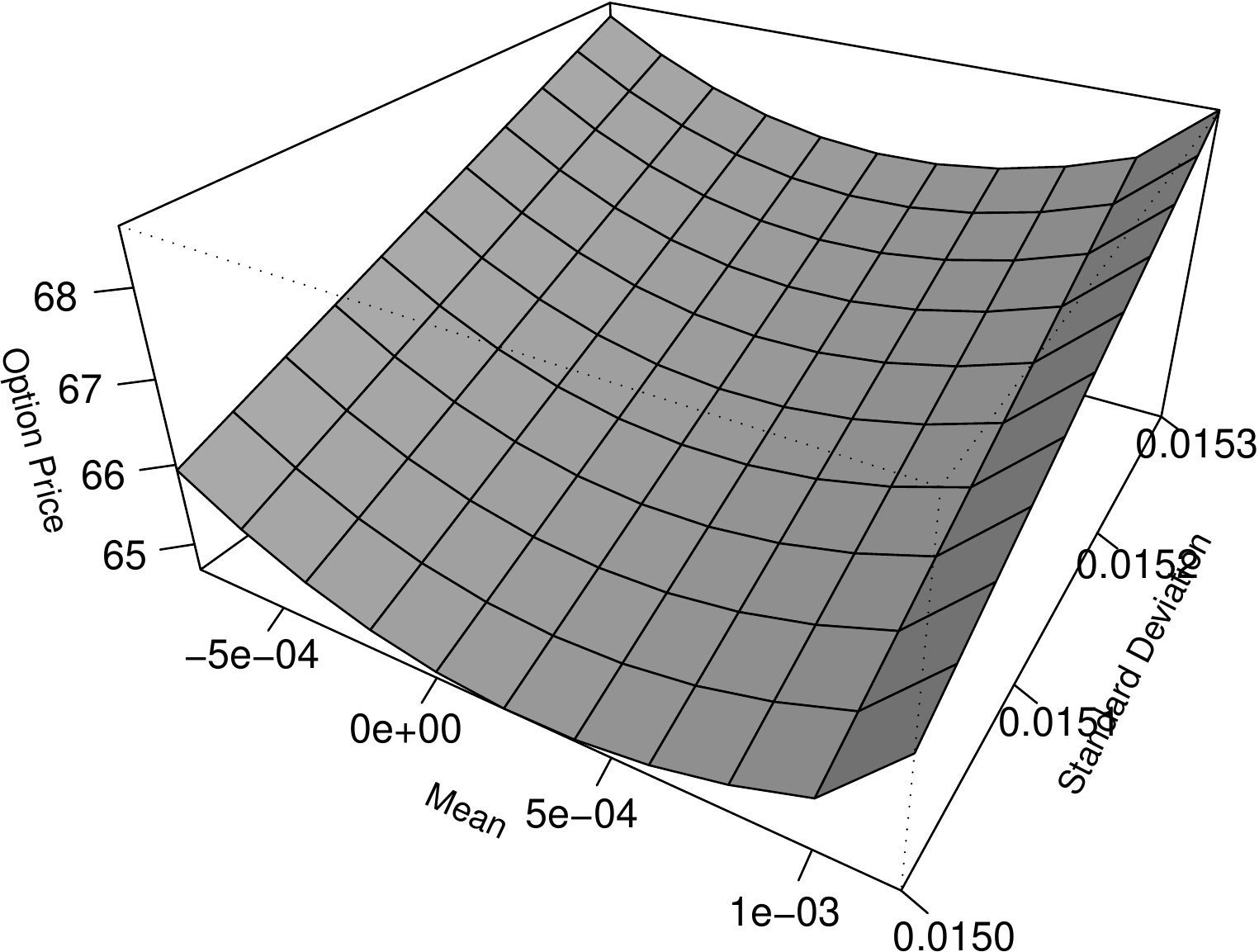}
 \caption{Option prices for the mean $\mu \in [-0.0008,0.0012]$ and the standard deviation $\sigma \in [0.0150,0.0153]$. The parameters for the call option are $S_0 = 7500$, $T = 10$ and $K=7700$.}\label{fig-calibration}
\end{figure}

It is well known that, when estimating the model parameters, reasonable confidence intervals for the mean $\mu$ can only be achieved for a very large number of observations. Therefore, it is important that the model behaves stable with respect to calibration errors. In order to demonstrate the stability of our pricing rules, we have computed option prices for various values of the mean $\mu = m_1$ and the standard deviation $\sigma = \sqrt{m_2 - m_1^2}$. For this procedure, we have calculated the respective parameters $\alpha^+,\alpha^-,\lambda^+,\lambda^- > 0$ by solving the system of equations (\ref{system-equations}) with $m_3, m_4$ given by (\ref{m3}), (\ref{m4}) and $\beta \in (0,1)$ given by (\ref{beta-est}), and computed the option prices with the minimal entropy martingale measure, i.e. with formula (\ref{pricing-entropy}) and $\theta = \theta_1$. Figure~\ref{fig-calibration} shows the computed option prices for $\mu$ varying from $-0.0008$ to $0.0012$, and $\sigma$ varying from $0.0150$ to $0.0153$, with current stock price $S_0 = 7500$, date of maturity $T = 10$ and strike price $K=7700$. The surface behaves locally flat and shows that the model is stable with respect to minor calibration errors.

\section{Conclusion}

In this paper, we have provided a systematic analysis of the existence of equivalent martingale measures for exponential stock price models driven by tempered stable processes, under which the computation of option prices remains analytically tractable.

In this section, we shall review our results and provide a comparison with the results derived in \cite{Bianchi-book}. The textbook \cite{Bianchi-book} deals with financial models driven by several types of tempered stable processes. Its studies encompass the CTS, GTS, KRTS, MTS, NTS, and RDTS processes. We refer to \cite{Bianchi-book} for further details, but point out that the tempered stable distributions considered in this paper correspond to the \emph{generalized classical tempered stable} (GTS) \emph{distributions} with mean
\begin{align}\label{m-mean}
m = \Gamma(1 - \beta^+) \frac{\alpha^+}{(\lambda^+)^{1 - \beta^+}} - \Gamma(1 - \beta^-) \frac{\alpha^-}{(\lambda^-)^{1 - \beta^-}},
\end{align}
see formula (3.4) on page~68 in \cite{Bianchi-book}, which seems to have a small typo. Note that the calculation of the mean in (\ref{m-mean}) is also consistent with formula (2.12) in \cite{Kuechler-Tappe-TS}.

As pointed out in Remark~\ref{rem-measures}, all measure transformations preserving the class of tempered stable processes are bilateral Esscher transforms. This is due to the result that for \begin{align*}
X \sim {\rm TS}(\alpha_1^+,\beta_1^+,\lambda_1^+;\alpha_1^-,\beta_1^-,\lambda_1^-) 
\end{align*}
under a probability measure $\mathbb{P}$ and
\begin{align*}
X \sim {\rm TS}(\alpha_2^+,\beta_2^+,\lambda_2^+;\alpha_2^-,\beta_2^-,\lambda_2^-) 
\end{align*}
under another probability measure $\mathbb{Q}$, the measures $\mathbb{P}$ and $\mathbb{Q}$ are equivalent if and only if $\alpha_1^+ = \alpha_2^+$, $\alpha_1^- = \alpha_2^-$, $\beta_1^+ = \beta_2^+$ and $\beta_1^- = \beta_2^-$. In Section~5.3.3 in \cite{Bianchi-book}, such a result has also been shown for GTS-processes, and the characteristic triplet of the logarithm of the Radon-Nikodym derivative has been determined.

Using bilateral Esscher transforms, we have investigated several martingale measures under which the driving process remains a tempered stable process. These martingale measures have been the Esscher martingale measure in Section~\ref{sec-Esscher} (which later turned out to be a special case of bilateral Esscher martingale measures), and the minimal entropy martingale measure as well as the $p$-optimal martingale measure in Section~\ref{sec-bilateral-Esscher}. Furthermore, we have provided a criterion for the existence of the F\"{o}llmer Schweizer minimal martingale measure in Section~\ref{sec-MMM}. In case of existence, the driving process turned out to be the sum of two independent tempered stable processes under the new measure, thus providing an analytically tractable model.

In Section~\ref{sec-option}, we have provided the option pricing formulae (\ref{pricing-Esscher})--(\ref{pricing-MMM}), which apply to the martingale measures that we have studied in the aforementioned sections. These formulae are based on Fourier transform techniques and follow from a result in \cite{Lewis}, which has also been provided in \cite{Carr-Madan}. An option pricing formula of this kind can also be found in Section~7.5 in \cite{Bianchi-book}; see formula (7.10) on page~152.

In our case study in Section~\ref{sec-case-study}, we have estimated the parameters of the tempered stable process from historical data of the German stock index DAX. Based on our previous results, we have determined appropriate martingale measures and have used these in order to compute option prices and implied volatility surfaces. In Section~7.5.2 in \cite{Bianchi-book}, the authors have proceeded differently. Namely, they do not consider the real-world probability measure, they rather calibrate the risk-neutral parameters of the tempered stable process from available option price data. Thus, they assume that the driving process is also a tempered stable process under the martingale measure, which means that the martingale measure is a bilateral Esscher martingale measure. However, comparing our Figure~\ref{fig-implied_vol} with Figure~7.1 on page~154 in \cite{Bianchi-book}, we observe similar results concerning the implied volatility surfaces: For short maturity dates we have a volatility smile, which flattens out for longer maturities.

The class of bilateral Gamma distributions, which occurs for $\beta^+ = \beta^- = 0$, is a limiting case within the class of tempered stable distributions. Stock price models driven by bilateral Gamma processes have been examined in \cite{Kuechler-Tappe-pricing}. Comparing our results from this paper with those from \cite{Kuechler-Tappe-pricing}, we see that for tempered stable processes with $\beta^+,\beta^- \in (0,1)$ we obtain 
more restrictive conditions concerning the existence of appropriate martingale measures than for driving bilateral Gamma processes. This is not surprising, as our investigations in \cite{Kuechler-Tappe-TS} have shown that, in many respects, the properties of bilateral Gamma distributions differ from those of all other tempered stable distributions.

\section*{Acknowledgement}

The authors are grateful to two anonymous referees for valuable comments and suggestions.

\end{document}